\newcommand{\abs}[1]{| #1 |}
\newcommand{\Vertices}{
            \begin{scope}[start chain=going below,node distance=7mm]
            \foreach \i in {1,2}
              \node[on chain] (a\i) {$a_\i$};
            \end{scope}
            
            \begin{scope}[xshift=4cm,yshift=0.5cm,start chain=going below,node distance=7mm]
            \foreach \i in {1,2,3}
              \node[on chain] (g\i) {$g_\i$};
            \end{scope}
            }
\newtheorem{lemma}{Lemma}
\newtheorem{theorem}{Theorem}
\newtheorem{observation}[theorem]{Observation}
\newtheorem{corollary}{Corollary}
\newtheorem{definition}{Definition}
\newtheorem{example}{Example}
\newcommand{\KM}[1]{\textcolor{blue}{KM: #1}}
\newcommand{\BRC}[1]{\textcolor{red}{BRC: #1}}
\newcommand{\HA}[1]{\textcolor{blue}{Hana: #1}}
\newcommand{\OPT}{\mathit{OPT}}
\DeclarePairedDelimiter\ceil{\lceil}{\rceil}
\DeclarePairedDelimiter\floor{\lfloor}{\rfloor}
\newcommand{\OPTH}{\OPT^H}
\newcommand{\EH}{E^H}
\newcommand{\EL}{E^L}
\newcommand{\AH}{A^H}
\newcommand{\BH}{B^H}
\newcommand{\CH}{C^H}
\newcommand{\NSW}{\mathrm{NSW}}
\newcommand{\APXAlg}{\mathit{APXAlg}}
\newcommand{\uv}{u}
\title{Nash Social Welfare for 2-value Instances}
\author[1]{Hannaneh Akrami}
\author[2]{Bhaskar Ray Chaudhury}
\author[3]{Kurt Mehlhorn}
\author[4]{Golnoosh Shahkarami}
\author[5]{Quentin Vermande}
\affil[1]{Max Planck Institute for Informatics, Universität des Saarlandes, \href{mailto:hakrami@mpi-inf.mpg.de}{hakrami@mpi-inf.mpg.de}}
\affil[2]{Max Planck Institute for Informatics, Universität des Saarlandes, \href{mailto:braycha@mpi-inf.mpg.de}{braycha@mpi-inf.mpg.de}}
\affil[3]{Max Planck Institute for Informatics, \href{mailto:mehlhorn@mpi-inf.mpg.de}{mehlhorn@mpi-inf.mpg.de}}
\affil[4]{Max Planck Institute for Informatics, Universität des Saarlandes, \href{mailto:gshahkar@mpi-inf.mpg.de}{gshahkar@mpi-inf.mpg.de}}
\affil[5]{\'{E}cole Normale Sup\'{e}rieure, Paris, \href{mailto:quentin.vermande@ens.fr}{quentin.vermande@ens.fr}}
\begin{document}
\maketitle

\begin{abstract}
	This paper is merged with \cite{fullversion}. We refer the reader to the full and updated version.
	
    We study the problem of allocating a set of indivisible goods among agents with 2-value additive valuations. Our goal is to find an allocation with  maximum Nash social welfare, i.e.,  the geometric mean of the  valuations of the agents. We give a polynomial-time algorithm to find a Nash social welfare maximizing allocation when the valuation functions are \emph{integrally 2-valued}, i.e., each agent has a value either $1$ or $p$ for each good, for some positive integer $p$. We then extend our algorithm to find a better approximation factor for general 2-value instances.     
\end{abstract}

\section{Introduction}
Fair division of goods has developed into a fundamental field in economics and computer science. In a classical fair division problem, the goal is to allocate a set of goods among a set of agents in a \emph{fair} (making every agent content with her bundle) and \emph{efficient} (achieving good overall welfare) manner. One of the most well studied class of valuation functions are \emph{additive} valuation functions, where the utility of a bundle is the sum of utilities of the individual goods in the bundle. When agents have {additive} valuation functions,  the Nash social welfare or equivalently the geometric mean of the valuations, $\Big(\prod_{i \in [n]} v_i(X_i) \Big)^{1/n}$, is a direct indicator of the fairness and efficiency of an allocation. In particular, any allocation that maximizes Nash social welfare is \emph{envy-free up to one good (EF1)}, i.e.,  no agent envies another agent following the removal of \emph{some} single good from the other agent's bundle and \emph{Pareto-optimal}, i.e.,  no allocation can give a single agent a better bundle without giving a worse bundle some other agent~\cite{CaragiannisKMP016}.  Unfortunately, maximizing Nash social welfare is APX-hard~\cite{Lee17} and allocations that achieve good approximations of Nash social welfare may not have similar fairness and efficiency guarantees. Despite this, finding good approximations of Nash social welfare has received substantial  interest over the years~\cite{ColeG18, BKV18, AnariGSS17}.

Although Nash social welfare maximization is hard when agents have general additive valuation functions, some special cases are polynomial time solvable. One of the interesting special cases is when agents have \emph{binary additive valuations}, i.e., when for each agent $i$ and each good $g$, we have $v_i(\{g\}) \in \{0,1\}$. Although, this class of valuation functions seem restrictive in their expressiveness of individual preferences, several real life scenarios involve preferences that are dichotomous and as a result there is substantial research on fair division under binary valuations~\cite{binary1, barman2018binarynsw, BL16, DarmannS15, binary3, HalpernPPS20}.  Barman et al.\ \cite{barman2018binarynsw} give a  polynomial time algorithm to find an allocation with maximum Nash social welfare when agents have binary additive valuation functions. Furthermore, for  binary valuations, Halpern et al.\ \cite{HalpernPPS20} show that determining a fair allocation via Nash social welfare maximization is also \emph{strategyproof}, i.e., agents do not benefit by misreporting their preferences. A generalization of binary valuation functions are 2-value functions, where for each agent $i$ and each good $g$, we have $v_i(\{g\}) \in \{a,b\}$, for some $a,b \geq 0$\footnote{2-value functions are binary valuation functions when $a=0$ and $b=1$. Additionally, for all non-binary 2-value instances, one can assume without loss of generality that $a=1$ and $b >a $}. Amanatadis et al.\ \cite{AmanatidisBFHV21} show that even when agents have 2-value functions, an allocation with maximum Nash social welfare implies stronger fairness notions such as envy-freeness up to any good (EFX), where no agent envies another agent following the removal of \emph{any} single good from the other agent's bundle. Thus, even for the more general 2-value instances, finding an allocation with maximum Nash social welfare is a canonical way of dividing goods fairly and efficiently. However, Amanatadis et al.\ \cite{AmanatidisBFHV21}, leave the problem of maximizing Nash social welfare for 2-value instances open. 

In this paper, we take steps towards solving this open problem. We consider the problem of maximizing Nash social welfare when the valuation functions of the agents are \emph{integrally 2-value}, i.e., for each agent $i$ and for each good $g$, we have $v_i(\{g\}) \in \{1,p\}$, for some positive integer $p$. The main result of our paper is a polynomial time algorithm for maximizing Nash social welfare when the valuation functions of the agents are integrally 2-value. 

\newtheorem*{IntegerResult}{Theorem \ref{IntegerResult}}
\begin{IntegerResult}
There exists a polynomial-time algorithm for maximizing Nash social welfare when the valuation functions of the agents are integrally 2-value.
\end{IntegerResult}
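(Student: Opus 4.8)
The plan is to decompose the problem into two parts: (i) deciding which goods are consumed by an agent who values them at $p$ (the \emph{high} goods) and which agent gets each of them, and (ii) distributing the remaining goods so as to balance the agents' utilities; and then to argue that only polynomially many configurations of part (i) need to be tried.

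First I would dispose of the trivial cases: if $p=1$ all goods are interchangeable and the optimum just splits the $m$ goods as evenly as possible among the $n$ agents; if $m<n$ then every allocation leaves some agent empty-handed and $\NSW=0$. So assume $p\ge 2$ and $m\ge n$. For an allocation $X=(X_1,\dots,X_n)$, let $h_i$ be the number of goods in $X_i$ that agent $i$ values at $p$ and $\ell_i$ the number she values at $1$, so $v_i(X_i)=p h_i+\ell_i$. In $X$, call a good \emph{high} if it lies in the bundle of an agent who values it at $p$ and \emph{low} otherwise; thus the high goods, together with the map sending each to its owner, form a partial assignment in the bipartite graph $G$ on agents and goods whose edges are the pairs $(i,g)$ with $v_i(\{g\})=p$, and $\sum_i h_i$ equals the number of high goods.

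The technical core would be a structural lemma asserting the existence of an NSW-maximizing allocation that is \emph{locally optimal} (no transfer of a single good increases the Nash product) and in which, for the induced set of high goods and the induced vector $(h_1,\dots,h_n)$, the low goods are distributed so as to make the utility profile as balanced as possible --- exactly the balancing rule that solves the binary-valuation instance on the residual goods. Granting this, the algorithm is: range over the candidate vectors $(k_1,\dots,k_n)$ of high-good counts; for each, a single max-flow computation in $G$ decides whether a partial assignment realizing these counts exists and, if so, produces one; then give each agent her $k_i$ high goods (utility $pk_i$ so far) and hand out the remaining $m-\sum_i k_i$ goods one at a time, each to a currently-poorest agent who values it at $1$ (with care at ties, and, when no such agent exists, promoting the good to a high good for whoever takes it); finally output the best allocation produced. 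Correctness of the last phase on the residual follows from the standard convexity/exchange argument for identical unit-value goods.

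The main obstacle is the structural lemma, and within it the claim that only a polynomial-size family of count vectors $(k_1,\dots,k_n)$ is relevant: $\NSW$ is not separable across agents, so neither the matroid-type machinery behind the binary case nor plain min-cost flow applies off the shelf. I would prove it by an exchange argument: given an optimum $\OPT$ and any locally optimal allocation $A$ of the above template, decompose the symmetric difference of their high-good partial assignments into alternating paths and cycles in $G$ and show that each can be ``rotated'' --- moving one high good at a time and compensating elsewhere with low goods --- without decreasing the Nash product, so that $A$ can be morphed into $\OPT$. The delicate points are the interaction between agents whose utility is entirely low-good-based and agents holding high goods, the threshold at which gaining one high good ($+p$) beats spreading several low goods over poorer agents, and bounding the spread of the optimal utilities tightly enough that the a priori exponential set of count vectors collapses to a polynomial family (for instance, one parameterized by a single scalar threshold on $v_i(X_i)$). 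Proving that collapse is where I expect the real work to lie.
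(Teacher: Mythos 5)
There is a genuine gap, and you identified it yourself: you do not know how to collapse the exponential family of count vectors $(k_1,\dots,k_n)$ to a polynomial one, and your algorithm as stated enumerates all of them. The key structural insight your plan lacks, and the paper supplies, is a precise characterization of the heavy part of an optimal allocation: there exists an $\NSW$-maximizer $\OPT$ such that $\OPT^H$ is \emph{leximax} among all heavy-only allocations of the same cardinality $|\OPT^H|$ (Theorem~\ref{heavysim}). This is much sharper than ``locally optimal'' or ``as balanced as possible'': once $\sum_i k_i$ is fixed, the whole profile $(k_1,\dots,k_n)$ is determined up to permutation, so the search space collapses from exponentially many vectors to at most $m+1$ cardinalities. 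The paper further shows (Lemma~\ref{lemmaOne}) that the leximax heavy-only allocation of cardinality $c-1$ is obtained from the one of cardinality $c$ by removing a heavy good from a bundle of maximum heavy degree, so one simply walks down from the maximum-cardinality leximax heavy-only allocation (computable by the Barman et al.\ binary $\NSW$ algorithm) rather than running a fresh max-flow per candidate. Your exchange/rotation argument on the symmetric difference of high-good assignments is in fact the right technique --- the paper proves Theorem~\ref{heavysim} exactly by decomposing $A^H\oplus C^H$ into alternating paths and cycles and ruling each case out --- but without naming the leximax target the argument does not tell you what to morph $A$ into, nor why only one profile per cardinality matters.

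Two further points you would need to close. First, you need the correctness of the greedy light-good distribution \emph{conditional on the right heavy part} (the paper's Lemma~\ref{greedy}), and then an argument for which cardinality is right: the paper proves a sensitivity/monotonicity property (Theorem~\ref{oneCase} and Lemma~\ref{importantLemma}) showing that repeatedly moving a good from a richest to a poorest agent walks the cardinality down through the optimal value and stops there, so it never needs to compare $\NSW$ across all $O(m)$ candidates; your enumerate-and-compare variant would also be polynomial, but only once the leximax lemma is in place to make ``the'' leximax profile per cardinality well-defined and computable. Second, a small inaccuracy in your sketch: after the heavy phase returns a maximum-cardinality heavy-only allocation, every unallocated good is light for \emph{every} agent, so the contingency you describe of ``promoting a good to a high good for whoever takes it'' during the light phase never arises.
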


We remark that an immediate corollary of Theorem~\ref{IntegerResult} is a better approximation of Nash social welfare when agents have 2-value functions (when $p$ may not be integral): For any instance where $p$ is not integral, we can round $p$ to the closest integer ($\ceil{p}$ or $\floor{p}$) and run our algorithm for integral $p$. It is not hard to see that this achieves a $\max \{ \frac{\floor{p}}{p}, \frac{p}{\ceil{p}}\}-$approximation of the maximum Nash social welfare. Also note that  $\max \{ \frac{\floor{p}}{p}, \frac{p}{\ceil{p}}\} \geq \sqrt{2}$ which is better than the best approximation of $e^{1/e}$ known for general additive valuations~\cite{BKV18}. 

We now highlight our main technical contributions.


\subsection{Our Techniques}

In this section, we give a brief overview of the main ideas and techniques used by our algorithm. 

We define the problem of maximizing Nash social welfare as a graph problem: We have a weighted complete bipartite graph  with the set of agents and the set of goods being the independent sets. The edge-weights represent the value of a good for an agent and are either $1$, i.e., \emph{light edge}, or $p$, i.e., \emph{heavy edge}. We say that a good is heavy if it has at least one incident heavy edge and light otherwise.  An allocation is a multi-matching in which all goods have degree at most one and agents can have degrees larger than one. This way of defining allocation allows us to use the idea of the augmenting paths, like the algorithm in~\cite{barman2018binarynsw}.

We start by mentioning some crucial structural differences of 2-value instances to {binary instances}. For binary instances, Halpern et al.\ \cite{HalpernPPS20} show that an allocation maximizes Nash social welfare if and only if it is \emph{leximax}\footnote{In \cite{HalpernPPS20} it is called leximin. However we find leximax more expressive.}, i.e., the utility profile of the allocation is lexicographically maximum. However, this is not true for 2-value instances. Consider the following example: There are two agents $a_1$, $a_2$ and five goods $g_1$, $g_2$, $g_3$, $g_4$, and $g_5$ and $p=5$. All goods are light for $a_2$ and agent $a_1$ values $g_1$ and $g_2$ heavily, and the other goods light. One can verify that a Nash social welfare maximizing allocation is the one where $a_1$ gets $\{ g_1, g_2 \}$ and $a_2$ gets $\{ g_3, g_4, g_5 \}$. However, this allocation is not leximax, as it is lexicographically dominated by the allocation $a_1 \gets \{ g_1 \}$, $a_2 \gets \{g_2, g_3, g_4, g_5 \}$. This necessitates finding some  \emph{other tractable characterization} of a Nash social welfare maximizing allocation, in particular a characterization of the allocation of heavy goods. This motivates the main structure of our algorithm: allocate the heavy goods carefully and then allocate the light goods greedily.

\paragraph{Characterizing the allocation of heavy goods.} The main bulk of our effort is in finding the correct allocation of the heavy goods. We briefly elaborate our technique that achieves this goal. Firstly, we give a nice characterization of the heavy goods allocation of a Nash social welfare maximizing allocation. We refer to the \emph{heavy-part} $A^H$ of an allocation $A$ as the set of all heavy edges in the allocation, and we call an allocation a \emph{heavy-only} allocation if the allocation contains only heavy edges, i.e., if $A^H = A$. One of our main structural results (shown in Theorem~\ref{heavysim}) is that there exists a Nash social welfare maximizing allocation $\OPT$, such that the heavy-part of $\OPT$ is leximax among all heavy-only allocations of the same cardinality. Therefore, if we know the number of heavy-edges in $\OPT$, then the utility profile of the heavy-part of $\OPT$ is unique (as it is leximax).  Thus, the main question boils down to finding a heavy-only allocation, which is leximax among all heavy-only allocations of the same cardinality, and has equal number of heavy-edges as that in $\OPT$\footnote{At this point, we make a subtle but important clarification. Note that a Nash social welfare maximizing allocation need not allocate all heavy goods along heavy edges. Consider a simple scenario where there are two agents $a_1$ and $a_2$, and there are two goods $g_1$ and $g_2$. Both $g_1$ and $g_2$ are heavy to $a_1$ and both are light to $a_2$. Both $g_1$ and $g_2$ are heavy goods. However, in an optimal allocation, one of the heavy goods is not allocated to an agent who finds it heavy. Thus, it is not immediate how to find the number of heavy edges in $\OPT$.}. 

\paragraph{Finding the right allocation of heavy goods.} The crucial technical barrier lies in the fact that we do not know the number of heavy edges in $\OPT$.  We briefly elaborate how we overcome this connundrum. We start the algorithm by finding a heavy-only allocation $A$ that maximizes Nash social welfare, is leximax  and subject to this,  has the highest number of heavy edges (such an allocation can be determined by adapting the algorithm of Barman et al.\ \cite{barman2018binarynsw}). Then, we allocate the light goods greedily to $A$, i.e., we iterate through the unallocated light goods and allocate a light good to an agent with smallest utility.  Note that by definition of $A$, the total number of heavy edges in $A$ is larger than or equal to that in $\OPT$, i.e., $|A^H| \geq |\OPT^H|$. Thereafter, as our second main result, we show that the allocation $A$ (after allocation of the light goods), exhibits \emph{local sensitivity to the heavy edges}, i.e., if the number of heavy edges is larger than that in $\OPT$, then a simple local reallocation can improve the Nash social welfare. In particular, given the allocation $A$, where $A^H$ is leximax among all heavy-only allocations of the same cardinality, if the number of heavy edges in $\OPT$ is smaller than that in $A$, then we can find another allocation $\hat{A}$ from $A$, by moving a heavy good from an agent with highest utility to an agent with lowest utility. Furthermore, we can guarantee that,
\begin{itemize}
    \item $\hat{A}$ is leximax among all heavy-only allocations of the same cardinality, 
    \item $|\hat{A}^H| = |A^H|-1$, and 
    \item the Nash social welfare of $\hat{A}$ is at least the Nash social welfare of $A$.
\end{itemize}
We 
explain how this property helps us circumvent the issue of not knowing the number of heavy edges in $\OPT$. Our algorithm starts with allocation $A$. We move a heavy good from an agent with highest utility to an agent with lowest utility as long as the Nash social welfare of the allocation improves. Let our final allocation be $\tilde{A}$. Note that $|\tilde{A}^H| \leq |\OPT^H|$, as otherwise we can still improve the Nash social welfare by the aforementioned reallocation. Also note that every time we perform the reallocation, the cardinality of the heavy part of the allocation decreases by exactly one. Since $|A^H| \geq |\OPT^H|$ and $|\tilde{A}^H| \leq |\OPT^H|$, our algorithm must have constructed an allocation $\hat{A}$ during the transition from $A$ to $\tilde{A}$, such that $|\hat{A}^H| = |\OPT^H|$. Thus, the heavy part of $\hat{A}$  is leximax among all heavy-only allocations of the same cardinality and has the number of heavy edges equal to that of $\OPT$. Therefore, $\hat{A}$ is our desired allocation. However, since we have that Nash social welfare of $\tilde{A}$ is at least the Nash social welfare of $\hat{A}$. $\tilde{A}$ is also a Nash social welfare maximizing allocation.


\subsection{Further Related Work}
 There are several polynomial time algorithms that find allocations achieving an $\mathcal{O}(1)$ approximation of the maximum Nash social welfare~\cite{ColeG18, BKV18, AnariGSS17}. The algorithm by Barman et al.\ \cite{BKV18} also achieve additional properties of fairness and efficiency like approximate EF1 and approximate Pareto-optimality. The Nash social welfare maximization has $\mathcal{O}(1)$ approximation algorithms, even when agents have more general valuation functions than additive valuation functions \cite{GargHM18, AnariMGV18, ChaudhuryCG0HM18}.  When agents have submodular and subadditive valuations, algorithms with approximation factors (almost) linear in $n$  had been obtained \cite{GargKK20, CGMehta21, BBKS20}. The $\mathcal{O}(n)$-approximation is also best approximation one can achieve with polynomially many value queries\footnote{In a value query, given an agent $i$ and a set $S$, the output is $v_i(S)$ where $v_i(\cdot)$ is the valuation function of agent $i$.} when agents have subadditive valuations~\cite{BBKS20}. Very recently, Li and Vondr{\'{a}}k ~\cite{LiVondrak21} improved the approximation factor from $\mathcal{O}(n)$ to $\mathcal{O}(1)$ when agents have submodular valuations. 
 
 There is also literature in guaranteeing high Nash social welfare with other fairness notions. For instance, relaxations of EFX can be guaranteed with high Nash welfare~\cite{CaragiannisGravin19, CGMehta21}, approximations of \emph{groupwise maximin share} (GMMS)~\cite{CKMS20} and \emph{maximin share} (MMS)~\cite{CKMS20, CaragiannisKMP016} are achieved with high Nash welfare.

\subsection{Independent Work}
In private communication, we are aware that similar results are obtained by Jugal Garg and Aniket Murhekar~\cite{GM21}. They also obtain a polynomial time algorithm to determine an allocation with maximum Nash welfare for instances where the valuation functions of the agents are integrally 2-valued.

\section{Preliminaries}

	We have a set $N$ of $n$ agents and a set $M$ of $m$ goods. Each agent $i$ has a utility $\uv_i$. Utilities are 2-value additive, i.e., $\uv_i(S) = \sum_{s \in S} \uv_i(\{s\})$ where $\uv_i(\{s\}) \in \{1,p\}$ for each $i \in N$ and $s \in M$ and $p$ is an integer greater than $1$.

	We maximize $\NSW$ which is the geometric mean of the utilities of agents for their bundles. Formally, for an allocation $A$ which assigns the bundle $A_i$ to agent $i$, $\NSW(A) = (\Pi_{i=1}^{n} \uv_i(A_i))^{1/n}$. The goal is to find an allocation maximizing $\NSW$. This notion defined by Nash \cite{Nash50} in 1950s captures two important properties of a desired allocation; efficiency and fairness. By $\NSW(X, \uv)$ we mean the Nash social welfare of allocation $X$ under utility vector $\uv$. In case $\uv$ is clear from the context, we might drop it and use $\NSW(X)$.

    \subsection{Utility Graphs}
    
	In most of the papers working on fair division, an allocation is defined as an $n$-tuple of disjoint bundles allocated to agents; i.e $A = (A_1, A_2, \dots, A_n )$ where $A_i$ is allocated to agent $i$. According to our techniques which heavily employ ideas similar to augmentation in matching algorithms, we find it more convenient to define an allocation from a graph point of view. Consider the complete bipartite graph $G = (N \cup M, E)$ where we have agents on one side and goods on the other side. We call the edge between agent $i$ and good $g$ \emph{heavy}, if $\uv_i({g}) = p$ and \emph{light} otherwise. We use $\EH$ and $\EL$ to denote the set of \emph{heavy} and \emph{light} edges respectively.  Moreover, good $g$ is \emph{heavy} for agent $i$ if $\uv_i(g) = p$ and it is \emph{light} for her otherwise. Figure \ref{fig1} shows an instance with $2$ agents and $3$ goods.
	
	An \emph{allocation} is a subset $A$ of $E$ such that for each $g \in M$ there is at most one edge in $A$ incident to $g$. Note that allocations are partial. If there is an edge $(i,g) \in A$, we say that $g$ is assigned to $i$ in $A$ or $i$ owns $g$ in $A$ or $A$ assigns $g$ to $i$. Otherwise, $g$ is unassigned. An allocation is \emph{complete} if all goods are assigned. For an agent $i$, we use $A_i$ for the set of goods assigned to $i$ in $A$. We refer to $A_i$ as the bundle of $i$ in $A$. Then $\uv_i(A_i)$ is the utility of $i$'s bundle for $i$. Figure \ref{fig2} shows an allocation for the instance shown in Figure \ref{fig1}.
	
	The \emph{utility vector} of an allocation $A$ is the vector $(\uv_1(A_1 ),$ $\ldots, \uv_n(A_n))$ and its \emph{utility profile} is the utility vector sorted in the non-descending order of utilities. A utility profile $(b_1,\ldots,b_n)$ is \emph{lexicographically larger} than a utility profile $(c_1,\ldots,c_n)$ or $(b_1,\ldots,b_n)\succ_{\mathit{lex}}(c_1,\ldots,c_n)$ if the profiles are different and $b_i > c_i$ for the smallest $i$ with $b_i \not= c_i$. An allocation $A$ with utility profile $(a_1,\ldots,a_n)$ is \emph{leximax} in a family $\cal A$ of allocations if for no allocation $B \in \cal A$ with utility profile $(b_1,\ldots,b_n)$, $(b_1,\ldots,b_n)\succ_{\mathit{lex}} (a_1,\ldots,a_n)$.  
	
	\begin{definition}
	   (\emph{Heavy-only allocation}) For an allocation $A$, its \emph{heavy part} $\AH$ is the restriction of $A$ to the heavy edges, i.e., $\AH = A \cap \EH$. An allocation is \emph{heavy-only} if $A = \AH$. Alternatively, $A \subseteq \EH$. 
	\end{definition}

	For an agent $i$, $\AH_i$ is the set of heavy edges incident to agent $i$ under allocation $A$.
	We refer to $\abs{\AH_i}$ as the \emph{heavy degree} of $i$ in $A$ and denote it $\deg_H(i,A)$ or $\deg_H(i)$. 
	\begin{figure}
        \centering
        \begin{subfigure}{7cm}
            \centering
            \begin{tikzpicture}[every node/.style={draw,circle}]]
            
            \Vertices
            
            \foreach \i in {1,2}
                \foreach \j in {1,2,3}
                    \draw[-, red] (a\i) -- (g\j);
                    
            \draw[-, ultra thick] (a1) -- (g1);
            \draw[-, ultra thick] (a1) -- (g2);
            \draw[-, ultra thick] (a2) -- (g2);
            \draw[-, ultra thick] (a2) -- (g3);
           
        \end{tikzpicture}
        \subcaption{}
	    \label{fig1}
        \end{subfigure}
        \begin{subfigure}{7cm}
            \centering
            \begin{tikzpicture}[every node/.style={draw,circle}]
            
            \Vertices
            
            \draw[-, ultra thick] (a1) -- (g1);
            \draw[-, ultra thick] (a2) -- (g2);
            \draw[-, red] (a1) -- (g3);
           
        \end{tikzpicture}

	    \subcaption{}
	    \label{fig2}
	   \end{subfigure}
	   \caption{The graph $G$ in (a) corresponds to an instance with $2$ agents on the left side and $3$ goods on the right. Thick black edges and thin red edges correspond to heavy and light edges, respectively.
	   The graph $G$ in (b) corresponds to a complete allocation $A$ in which $g_1$ and $g_3$ are allocated to $a_1$ and $g_2$ is allocated to $a_2$. Note that this allocation does not maximize $\NSW$. $G$ restricted to black edges is $\AH$. We have $deg_H(a_1)=deg_H(a_2)=1$.}
    \end{figure}
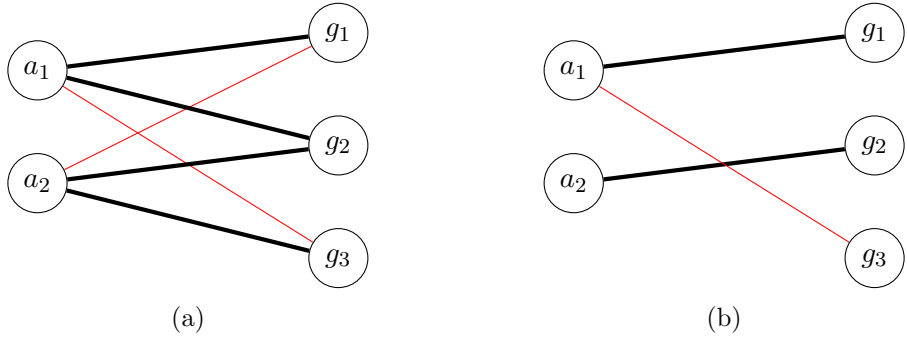

	\subsection{Alternating Paths}\label{altpath}
	In our definition of allocations, they correspond to multi-matchings. Later, in order to improve one allocation, we alternate the edges along a path consisting of every other edge inside the current multi-matching. Hence, it is useful to define alternating paths.
	
	\begin{definition}
	    (\emph{Heavy alternating path}) An \emph{alternating path} with respect to an allocation $A$ is any path whose edges are alternating between $A$ and $E \setminus A$. A \emph{heavy alternating path} is an alternating path all of whose edges belong to $\EH$. 
	\end{definition}
	See Figure \ref{fig3} for an example of a heavy alternating path.
    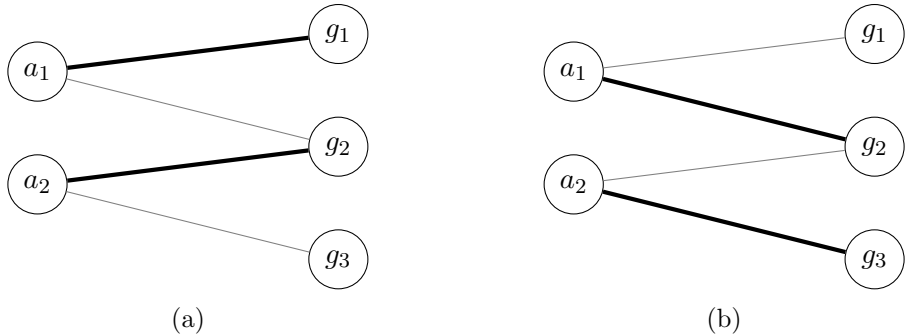
\begin{figure}
        \centering
        \begin{subfigure}{7cm}
            \centering
            \begin{tikzpicture}[every node/.style={draw,circle}]
            
            \Vertices
            
            \draw[-, ultra thick] (a1) -- (g1);
            \draw[-, ultra thick] (a2) -- (g2);
            \draw[-, gray] (a1) -- (g2);
            \draw[-, gray] (a2) -- (g3);
            
            \end{tikzpicture}
            \subcaption{}
        \end{subfigure}
        \begin{subfigure}{7cm}
            \centering
            \begin{tikzpicture}[every node/.style={draw,circle}]
            
            \Vertices
            
            \draw[-, gray] (a1) -- (g1);
            \draw[-, gray] (a2) -- (g2);
            \draw[-, ultra thick] (a1) -- (g2);
            \draw[-, ultra thick] (a2) -- (g3);
            
        \end{tikzpicture}

	    \subcaption{}
        \end{subfigure}
        \caption{The path $P=(g_1, a_1, g_2, a_2, g_3)$ in (a) is a heavy alternating path where all the edges are heavy and the thick black edges shows allocation $A$.
        (b) shows $A \oplus P$.}
        \label{fig3}
    \end{figure}

	\begin{definition}
	    (\emph{Alternating path wrt two allocations})
	    An \emph{alternating path} with respect to  two allocations $A$ and $B$ is any path whose edges are alternating between $A \setminus B$ and $B \setminus A$, i.e., between edges only in $A$ and edges only in $B$. 
	\end{definition}
	
	An \emph{alternating path decomposition} is defined with respect to two heavy-only allocations $A$ and $B$. The graph $A \oplus B$ is defined on the same set of vertices as in $A$ and $B$. Moreover, the edge $e$ appears in $A \oplus B$, if and only if $e$ is in exactly one of $A$ or $B$.
	We want to decompose $A \oplus B$ into edge-disjoint paths.
	Note that in $A \oplus B$, goods have degree zero, one, or two. 
	For a good of degree two, the two incident edges belong to the same path. For an agent $i$, let $a_i$ ($b_i$) be the number of $A$-($B$)-edges incident to $i$ in $A \oplus B$. Then we have $\min(a_i,b_i)$ alternating paths passing through $i$,  $\max(0, a_i - b_i)$ alternating paths starting in $i$ with an edge in $A$, and $\max(0, b_i - a_i)$ alternating paths starting in $i$ with an edge in $B$.

    If $P$ is an even length heavy alternating path with respect to $A$ connecting two agents $i$ and $j$ with the edge of $P$ incident to $i$ in $\EH \setminus A$ and the edge incident to $j$ in $\AH$, then 
	$A \oplus P$ contains the same number of heavy edges as $A$, i.e., $\abs{\AH} = \abs{(A \oplus P)^H} = \abs{\AH \oplus P}$. Moreover, the heavy degree of $i$ increased, the heavy degree of $j$ decreased and all other heavy degrees are unchanged.  
	
	\begin{example}
	    Consider the example shown in Figure \ref{fig1} and allocation $A$ with 
	    $$A^H = \{(a_1,g_1), (a_2, g_2)\}$$
	    shown in Figure \ref{fig2}. Let $B$ be another allocation for which 
	    $$B^H = \{(a_1,g_1),(a_1,g_2)\}.$$
	    Then, Figure \ref{figPathDecompose} shows $A^H \oplus B^H$. Black edges are only in $A^H$ and green edges are only in $B^H$. The path decomposition of $A \oplus B$ is the only path $P=(a_1,g_2,a_1)$ that exists in this graph.
	\end{example}
	
	\begin{figure}
	\centering
	    \begin{tikzpicture}[every node/.style={draw,circle}]
            
            \Vertices
            
            (a1) edge [bend right] node (g1)
            \draw[-, ultra thick, green] (a1) -- (g2);
            \draw[-, ultra thick] (a2) -- (g2);
            
        \end{tikzpicture}

	    \caption{$A^H \oplus B^H$}
	    \label{figPathDecompose}
	\end{figure}
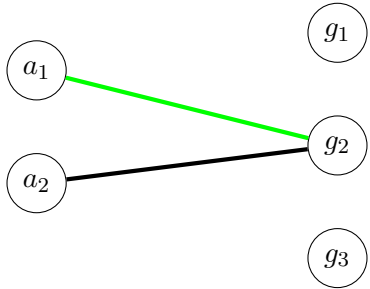	
	We use the following notions in the rest of the paper. 
	\begin{align*}
	    \OPT &= \text{an allocation maximizing $\NSW$}\\
	    \OPT^H &= \text{the heavy part of $\OPT$}\\
	    \OPT_i &= \text{the bundle of agent $i$ in $\OPT$}\\
	    \OPT^H_i&= \text{the bundle of agent $i$ in $\OPT^H$}
	\end{align*}

\begin{definition}
	    The distance of two allocations is the number of edges that only exist in one of the allocations;
	    formally, the distance of two allocations $A$ and $B$ is $|A \oplus B|$.
	\end{definition}

\section{Properties of an Optimal Allocation}	
    
    In this section, we study the properties of an optimal allocation. The main property is stated in Theorem \ref{heavysim}. Roughly speaking, this theorem states that there exists an optimal allocation $\OPT$, in which heavy goods are assigned as evenly as possible. More formally, the utility profile of $\OPTH$ is leximax among all heavy-only allocations with the same cardinality. Later, we use this property to prove that the utility profile $\AH$ in the end of Algorithm \ref{alg2} is equal to the utility profile of $\OPT^H$, if $\OPT$ is chosen wisely among optimal allocations. After this, it will not be difficult to prove that the utility profiles of $A$ and $\OPT$ match.
    
    Let $\min(\OPT) = \min_i \uv_i(\OPT_i)$ be the minimum utility of any bundle in $\OPT$.
	
	\begin{lemma}
	    If $\uv_j(\OPT_j) \ge \min(\OPT) + 2$ then all goods in $\OPT_j$ are heavy for $j$. 
	\end{lemma}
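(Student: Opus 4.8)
The plan is to argue by contradiction via a simple exchange argument. Suppose agent $j$ has $\uv_j(\OPT_j) \ge \min(\OPT) + 2$ but owns a good $g$ that is light for $j$, i.e. $\uv_j(g) = 1$. Let $i$ be an agent achieving $\uv_i(\OPT_i) = \min(\OPT)$. I would move $g$ from $j$ to $i$, obtaining a new allocation $\OPT'$ with $\OPT'_i = \OPT_i \cup \sset{g}$ and $\OPT'_j = \OPT_j \setminus \sset{g}$, and all other bundles unchanged.

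The key computation is to check that this strictly increases (or at least does not decrease, then refine to strict) the Nash social welfare, contradicting optimality of $\OPT$. After the move, $\uv_i(\OPT'_i) = \uv_i(\OPT_i) + \uv_i(g) \ge \min(\OPT) + 1$ since $\uv_i(g) \ge 1$, and $\uv_j(\OPT'_j) = \uv_j(\OPT_j) - 1 \ge \min(\OPT) + 1$ by the hypothesis $\uv_j(\OPT_j) \ge \min(\OPT) + 2$. So both modified bundles still have utility at least $\min(\OPT) + 1 > \min(\OPT)$, hence in particular nonzero, so $\NSW(\OPT')$ is well-defined and positive. The only factors of the product $\prod_k \uv_k(\OPT'_k)$ that change are those for $i$ and $j$. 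So it suffices to compare $\uv_i(\OPT_i)\cdot \uv_j(\OPT_j)$ with $\uv_i(\OPT'_i)\cdot \uv_j(\OPT'_j)$. Writing $x = \uv_i(\OPT_i)$, $y = \uv_j(\OPT_j)$, and $c = \uv_i(g) \ge 1$, we need $(x+c)(y-1) > xy$, i.e. $cy - x - c > 0$, i.e. $c(y-1) > x$. Since $y \ge \min(\OPT) + 2 \ge x + 2$, we have $y - 1 \ge x + 1 > x \ge x/c$ (using $c \ge 1$ and $x \ge 1$), so indeed $c(y-1) \ge y - 1 > x$. Thus the product strictly increases, contradicting that $\OPT$ maximizes $\NSW$.

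The one subtlety I expect to need care is the edge case analysis: I should make sure $x = \min(\OPT) \ge 1$ so that all utilities are genuinely positive and the strict inequality $y - 1 > x$ follows cleanly from $y \ge x + 2$; this holds as long as every agent receives a nonempty bundle in $\OPT$, which one can assume for an $\NSW$-maximizer when $m \ge n$ (otherwise $\NSW(\OPT) = 0$ and the statement is essentially vacuous, or handled separately). I would state this normalization at the outset. The main obstacle is really just bookkeeping — the exchange argument itself is standard — so the proof should be short, with the bulk being the inequality $(x+c)(y-1) > xy$ under $y \ge x+2$ and $c \ge 1$.
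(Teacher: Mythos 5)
Your proof is correct and follows essentially the same exchange argument as the paper: move a light good from $j$ to a minimum-utility agent $i$ and observe that the product of utilities strictly increases, contradicting optimality. You simply spell out the arithmetic $(x+c)(y-1) > xy$ and the $\min(\OPT)=0$ edge case that the paper leaves implicit.
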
 
	\begin{proof}
	    Assume otherwise, and take a good that is light for $j$ and reallocate it to an agent $i$ for which $\uv_i(\OPT_i) = \min(\OPT)$. This will improve $\NSW$.
	\end{proof}
		
	\begin{corollary}\label{light-bundles}
		In $\OPT$ only bundles of utility $\min(\OPT)$ and $\min(\OPT) + 1$ can contain light goods. Bundles with higher value only contain goods that are heavy for the owner. 
	\end{corollary}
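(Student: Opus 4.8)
The plan is to obtain this as an immediate consequence of the preceding lemma by contraposition, so there is really nothing new to prove. The lemma asserts that whenever $\uv_j(\OPT_j) \ge \min(\OPT) + 2$, the bundle $\OPT_j$ consists entirely of goods that are heavy for $j$. Contrapositively, if $\OPT_j$ contains at least one good that is light for $j$, then $\uv_j(\OPT_j) \le \min(\OPT) + 1$. Since $\uv_j(\OPT_j) \ge \min(\OPT)$ holds for every agent $j$ by the definition of $\min(\OPT)$, the only values such a bundle can take are $\min(\OPT)$ or $\min(\OPT) + 1$; this is the first assertion of the corollary. The second assertion — that a bundle with utility strictly larger than $\min(\OPT) + 1$, i.e.\ at least $\min(\OPT) + 2$, contains only goods heavy for its owner — is verbatim the preceding lemma applied to that bundle.

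I do not anticipate any obstacle: the entire content sits in the lemma, whose proof is the one-line exchange argument (if some over-rich bundle held a light good, moving that good to an agent attaining $\min(\OPT)$ would strictly increase $\NSW$, contradicting optimality of $\OPT$). I would simply state the two reformulations above; no case analysis, no computation, and no appeal to the leximax structure of $\OPTH$ is required at this point.
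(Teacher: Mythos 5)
Your proposal is correct and matches the paper's intent exactly: the corollary is stated without proof precisely because it is the contrapositive of the preceding lemma together with the trivial lower bound $\uv_j(\OPT_j) \ge \min(\OPT)$. Nothing further is needed.
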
 
	
	\begin{lemma}
	    There is no heavy alternating path starting with an $\OPT$-edge from agent $j$ to agent $i$ if $\uv_j(\OPT_j) > \uv_i(\OPT_i) + p$.  
	\end{lemma}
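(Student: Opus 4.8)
The plan is to argue by contradiction with an exchange along the path. Suppose $P$ is such a heavy alternating path, starting at $j$ with an $\OPT$-edge and ending at agent $i$. Since agents and goods lie on opposite sides of the bipartite graph $G$, and $P$ starts at an agent with an $\OPT$-edge and ends at an agent, $P$ has even length, the edge of $P$ incident to $j$ lies in $\OPTH$, and the edge of $P$ incident to $i$ lies in $\EH \setminus \OPT$ (this is exactly the configuration discussed above for even-length heavy alternating paths). Write $P = (j = a_0, g_1, a_1, g_2, \dots, g_k, a_k = i)$, where $(a_{t-1}, g_t) \in \OPT$ and $(g_t, a_t) \notin \OPT$ for every $t \in \{1, \dots, k\}$, and all edges of $P$ are heavy.

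First I would form the allocation $B = \OPT \oplus P$ and check that it is a legal (partial) allocation: since $P$ is a simple path alternating with respect to $\OPT$, every good on $P$ is internal to $P$, hence has degree two on $P$, and therefore keeps exactly one incident edge in $B$; goods not on $P$ are untouched. So every good has at most one incident edge in $B$. Next I would track the effect on the bundles. In passing from $\OPT$ to $B$, good $g_t$ moves from $a_{t-1}$ to $a_t$ for each $t$; hence $j = a_0$ loses the good $g_1$ (heavy for $j$), agent $i = a_k$ gains the good $g_k$ (heavy for $i$), and each intermediate agent $a_1, \dots, a_{k-1}$ exchanges one heavy good for another heavy good and thus keeps the same utility. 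All agents outside $P$ are unaffected. Therefore $\uv_j(B_j) = \uv_j(\OPT_j) - p$, $\uv_i(B_i) = \uv_i(\OPT_i) + p$, and $\uv_\ell(B_\ell) = \uv_\ell(\OPT_\ell)$ for all $\ell \notin \{i,j\}$. (Note that the hypothesis gives $\uv_j(\OPT_j) > \uv_i(\OPT_i) + p \ge p$, so $\uv_j(B_j) > 0$ and $B$ has positive $\NSW$.)

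Finally, since $B$ and $\OPT$ agree on every agent other than $i$ and $j$, we get $\NSW(B)^n / \NSW(\OPT)^n = \big(\uv_j(\OPT_j) - p\big)\big(\uv_i(\OPT_i) + p\big) \big/ \big(\uv_j(\OPT_j)\,\uv_i(\OPT_i)\big)$, and the numerator minus the denominator equals $p\big(\uv_j(\OPT_j) - \uv_i(\OPT_i) - p\big) > 0$ by the assumption $\uv_j(\OPT_j) > \uv_i(\OPT_i) + p$. Hence $\NSW(B) > \NSW(\OPT)$, contradicting the optimality of $\OPT$, and no such path can exist. There is no substantial obstacle here; the only points requiring care are checking that $\OPT \oplus P$ is a valid allocation and that every intermediate agent on $P$ trades exactly one heavy good for one heavy good, so that only the two endpoint utilities change by $\pm p$.
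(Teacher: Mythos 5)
Your proof is correct and is precisely the detailed unfolding of the paper's one-line argument ("augmentation of the path improves NSW"): you form $\OPT \oplus P$, observe that only $j$ and $i$ change utility, by $-p$ and $+p$ respectively, and verify that the product strictly increases under the hypothesis $\uv_j(\OPT_j) > \uv_i(\OPT_i) + p$. No difference in approach, only in the level of detail.
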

	\begin{proof}
	    Otherwise, augmentation of the path improves $\NSW$.
	\end{proof}
	
	\begin{lemma} \label{prop3}
	    If good $g$ is allocated as a light good to agent $i$, but could be allocated as a heavy good to agent $j$ who is allocated a light good $g'$, then the allocation is not optimal. 
	\end{lemma}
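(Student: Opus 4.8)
The plan is to argue by a local exchange (reallocation) that strictly increases Nash social welfare, contradicting optimality of $\OPT$. Suppose $g$ is allocated to agent $i$ as a light good (so $\uv_i(g) = 1$), agent $j$ satisfies $\uv_j(g) = p$, and $j$ currently owns some light good $g'$ (so $\uv_j(g') = 1$). I would form a new allocation $\OPT'$ by swapping: give $g$ to $j$ and $g'$ to $i$. All other bundles stay the same, so it suffices to compare $\uv_i(\OPT_i)\cdot \uv_j(\OPT_j)$ with $\uv_i(\OPT'_i)\cdot \uv_j(\OPT'_j)$.

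First I would record the utility changes. Write $x = \uv_i(\OPT_i)$ and $y = \uv_j(\OPT_j)$. Since $g$ is light for $i$ and $g'$ is light for $i$, agent $i$'s utility is unchanged: $\uv_i(\OPT'_i) = x - 1 + 1 = x$. For agent $j$: removing the light good $g'$ loses $1$, adding $g$ gains $p$, so $\uv_j(\OPT'_j) = y - 1 + p = y + (p-1)$. Hence the product changes from $xy$ to $x\,(y + p - 1)$, and since $p > 1$ (so $p - 1 \ge 1$) and $x \ge 1$, we get $x(y+p-1) > xy$. Therefore $\NSW(\OPT') > \NSW(\OPT)$, contradicting optimality, which proves $\OPT$ cannot have this configuration.

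One subtlety I would make sure to state cleanly: the swap is well-defined only if $g \ne g'$, but this is automatic since $g$ is owned by $i$ and $g'$ by $j$ (and $i \ne j$, as $j$ values $g$ heavily while $i$ values it lightly). Also I should note that "could be allocated as a heavy good to agent $j$" just means $\uv_j(g) = p$; no feasibility obstruction exists because $g$ can always be moved (goods have degree at most one, and we are simultaneously freeing a slot by moving $g'$ away from $j$). The argument is entirely local and does not even need $g'$ to be reassigned to a minimum-utility agent — handing it to $i$ suffices and keeps $i$'s utility constant.

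The main (and only mild) obstacle is bookkeeping: being careful that both $g$ and $g'$ are light for $i$, so that $i$'s utility is genuinely unchanged, and that the comparison reduces to the single inequality $x(y+p-1) > xy$. There is no hard step here — it is a one-line exchange argument in the same spirit as the preceding lemmas, whose proofs also just say "otherwise, augmentation/reallocation improves $\NSW$."
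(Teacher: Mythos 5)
Your proposal is correct and uses exactly the same local-exchange argument as the paper. One small imprecision worth fixing: you assert (and flag as the ``main obstacle'') that $g'$ is light for $i$, so that $i$'s utility is \emph{unchanged} after the swap. But the hypothesis only tells you that $g'$ is light for its current owner $j$; nothing rules out $\uv_i(g') = p$. The clean statement, which the paper uses, is that $i$'s utility \emph{does not decrease} under the swap: $i$ gives up a good worth $1$ to her and receives a good worth at least $1$. Either way $\uv_j(\OPT'_j) = \uv_j(\OPT_j) + p - 1 > \uv_j(\OPT_j)$ while $\uv_i(\OPT'_i) \ge \uv_i(\OPT_i) \ge 1$, so the product strictly increases and the contradiction goes through; the case where $g'$ is heavy for $i$ only makes the argument easier, it does not need to be excluded.
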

	\begin{proof}
	    Swapping the goods $g$ and $g'$ among agent $i$ and agent $j$, increases the value of agent $j$ by $p -1$ and the value of agent $i$ does not decrease.
	\end{proof}
	
	The rest of this section is dedicated to proving the following theorem. 
	
	\begin{theorem} \label{heavysim}
		Among all allocations with maximum $\NSW$, there exists an allocation $A$ such that the utility profile of $\AH$ is leximax among all heavy-only allocations of the same cardinality.
	\end{theorem}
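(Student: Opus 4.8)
The plan is to start from an arbitrary $\NSW$-maximizing allocation $\OPT$ and transform it, without changing its utility profile (hence keeping it optimal), into one whose heavy part is leximax among all heavy-only allocations of the same cardinality. Fix $k = |\OPT^H|$ and let $L$ be a heavy-only allocation with $|L| = k$ whose utility profile is leximax among all such allocations; the goal is to argue that, if $\OPT^H$ is not already leximax, we can "rotate" heavy goods along alternating paths to make $\OPT^H$ agree with $L$ in utility profile while only moving light goods around to compensate, leaving the overall utility vector (up to permutation) unchanged.

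First I would analyze the symmetric difference $\OPT^H \oplus L$ using the alternating-path decomposition from Section~\ref{altpath}. Since $|\OPT^H| = |L| = k$, in this decomposition the number of paths starting in an agent with an $\OPT^H$-edge equals the number starting with an $L$-edge, and every agent-to-agent alternating path has even length and preserves the total count of heavy edges when we augment. The key structural input is Corollary~\ref{light-bundles} together with the two "no improving heavy alternating path" lemmas: because $\OPT$ is optimal, any heavy alternating path from an agent $j$ to an agent $i$ can only exist when $\uv_j(\OPT_j) \le \uv_i(\OPT_i) + p$, and bundles with utility exceeding $\min(\OPT)+1$ contain only heavy goods. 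I would use these to show that augmenting a suitably chosen alternating path in $\OPT^H \oplus L$ — one that moves a heavy edge from a "richer" agent to a "poorer" agent, chosen so that the heavy-degree vector moves lexicographically toward that of $L$ — does not decrease $\NSW$, and then, by re-greedily reallocating the light goods that were freed or that now need a new owner, restores the exact utility profile of $\OPT$. Since $\OPT$ was optimal and $\NSW$ did not drop, the new allocation is still optimal, and we have strictly decreased the distance $|\cdot \oplus L|$ of its heavy part to $L$ (or at least made its heavy utility profile lexicographically closer to leximax).

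The main obstacle I anticipate is exactly this compensation step: after augmenting an alternating path, one agent gains $p$ in heavy value and another loses $p$, but to keep the utility vector a permutation of $\OPT$'s we must argue the light goods can be reshuffled to absorb precisely this change. This needs the fact that the "loser" agent's bundle utility cannot have been too far above $\min(\OPT)$ — so it had light goods to give up (Corollary~\ref{light-bundles}) — and that the agent who becomes light-deficient can pick up light goods without violating optimality (Lemma~\ref{prop3}). I would handle this by choosing the alternating path carefully: among all agents whose heavy degree in $\OPT^H$ differs from the "ideal" leximax value, pick the pair $(j,i)$ with $j$ currently heaviest and $i$ currently lightest, so that the move is of the "Robin Hood" type, and then invoke the optimality lemmas to rule out the bad cases. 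A secondary obstacle is termination: I would use a potential argument — e.g. the sorted heavy-degree vector, or the number of heavy goods mis-assigned relative to $L$ — that strictly decreases under each rotation, and combine it with the fact that $|\OPT^H|$ never changes, so the process terminates at an optimal allocation whose heavy part is leximax among heavy-only allocations of cardinality $k$.
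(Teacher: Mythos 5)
Your overall plan is the same one the paper uses: decompose $\OPT^H \oplus C^H$ into alternating paths (the paper fixes both an optimal $A$ and a leximax $C^H$ minimizing $|A^H \oplus C^H|$, rather than iterating from an arbitrary $\OPT$, but this is a cosmetic difference), rotate a heavy good from a heavy-rich agent $i$ to a heavy-poor agent $j$, and compensate with light goods. You correctly identify the compensation step as the crux. However, your stated goal for that step---to ``restore the exact utility profile of $\OPT$'' after each rotation---is not in general achievable, and this is a genuine gap. Augmenting the path raises $\uv_j$ by $p$ and lowers $\uv_i$ by $p$, and restoring the profile would require moving exactly $p$ units of value (light for $j$) from $j$ to $i$. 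But if $\deg_H(i,A) - \deg_H(j,A) \ge 2$ and $A_j$ contains fewer than $p$ goods that are light for $j$, no such transfer is possible, and Corollary~\ref{light-bundles} does not give you what you want: it constrains $\OPT_j$'s composition only when $\uv_j(\OPT_j) \le \min(\OPT)+1$, which need not hold for $j$. The paper resolves exactly this case not by reshuffling but by showing that moving whatever light goods $j$ does have (say $\ell < p$ of them) yields a \emph{strict} $\NSW$ increase, because $\uv_i(A_i) > \uv_j(A_j) + p$ forces $(\uv_i - p + \ell)(\uv_j + p - \ell) > \uv_i\uv_j$; that contradicts optimality, so the offending path cannot exist in the first place (Observation~\ref{noEvenLAgents}). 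Your appeal to ``invoke the optimality lemmas to rule out the bad cases'' gestures at this but does not supply the argument, and the concavity calculation is precisely what makes the theorem true.

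Two smaller points. First, the paper needs a separate, easier case when $\deg_H(i,A) - \deg_H(j,A) = 1$ (Observation~\ref{jHasLessHeavyGoodThani-1}): there a profile-preserving swap of light goods works, and the contradiction comes from the minimum-distance choice, not from $\NSW$. Your single-case ``Robin Hood plus reshuffle'' description conflates these two regimes. Second, you assert that every agent-to-agent alternating path in $\OPT^H \oplus L$ has even length and that the counts balance; you should also handle the odd-length (agent-to-good) paths. Since $|\OPT^H| = |L|$, odd paths come in pairs, one $\OPT^H$-heavy and one $L$-heavy, and the paper (Observation~\ref{noOddL}) replaces a single even path $P$ by the union $P \cup Q$ of such a pair and reruns the same argument. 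Neither of these is fatal to your strategy, but the first gap is the substantive one: the invariant ``preserve the utility profile at every step'' is simply false, and the proof must instead split into a profile-preserving case and a strict-improvement case, with the latter yielding a contradiction rather than progress.
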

	
	
	We choose $A$ and heavy-only $\CH$ as follows: (1) $A$ is an optimal allocation, (2) $\CH$ is leximax among all allocations of $\abs{\AH}$ heavy goods, and (3) the distance of $\AH$ and $\CH$ is minimum among all allocations satisfying (1) and (2). 
	
    Let us consider $\AH \oplus \CH$. 
    We label the edge with either $A$ or $C$ indicating whether it belongs to $\AH$ or $\CH$. 
    Note that in this graph, goods have degree zero, one, or two. We decompose the graph into edge-disjoint heavy alternating paths in the way that was described in section \ref{altpath}. We first show that there are no heavy alternating cycles.
    
    \begin{observation}\label{noCycle}
    There are no heavy alternating cycles in the decomposition.
    \end{observation}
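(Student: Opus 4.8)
The plan is to argue by contradiction, exploiting the minimality in condition (3) of the choice of $A$ and $\CH$ (namely that the distance $\abs{\AH \oplus \CH}$ is smallest among all pairs satisfying (1) and (2)). Suppose the decomposition of $\AH \oplus \CH$ contains a heavy alternating cycle $Z$. Its edges alternate between $\AH \setminus \CH$ and $\CH \setminus \AH$, so at every vertex of $Z$ the number of incident $Z$-edges from $\AH$ equals the number of incident $Z$-edges from $\CH$; in particular every good on $Z$ has exactly one incident edge from each side.

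First I would build the new heavy-only allocation $\CH' \assign \CH \oplus Z$, obtained by replacing along $Z$ the $\CH$-edges by the $\AH$-edges. Since $Z$ alternates and $\CH$ is a valid partial allocation, $\CH'$ is again a valid heavy-only allocation: each good still has heavy degree at most one. Moreover, at each agent $i$ the cycle $Z$ removes exactly as many $\CH$-edges as it adds $\AH$-edges, so $\deg_H(i,\CH') = \deg_H(i,\CH)$ for every $i$; hence $\abs{\CH'} = \abs{\CH} = \abs{\AH}$ and $\CH'$ has the same utility profile as $\CH$. Consequently $\CH'$ is still leximax among all heavy-only allocations with $\abs{\AH}$ heavy edges, so the pair $(A,\CH')$ still satisfies conditions (1) and (2).

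Next I would compare distances. Off $Z$ nothing changed, so an edge not on $Z$ lies in $\AH \oplus \CH'$ iff it lies in $\AH \oplus \CH$. For an edge $e$ on $Z$: if $e \in \AH \setminus \CH$ then after the swap $e \in \AH \cap \CH'$, and if $e \in \CH \setminus \AH$ then after the swap $e$ lies in neither $\AH$ nor $\CH'$; in both cases $e \notin \AH \oplus \CH'$. Hence $\AH \oplus \CH' = (\AH \oplus \CH) \setminus Z$, and since $Z$ is nonempty (a cycle in this bipartite graph has at least four edges), $\abs{\AH \oplus \CH'} < \abs{\AH \oplus \CH}$. This contradicts the minimality in condition (3), so no heavy alternating cycle occurs in the decomposition.

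The only point requiring a little care — and the step I expect to be the main obstacle — is the verification that swapping $\CH$ along the cycle $Z$ preserves all heavy degrees (hence the cardinality and the utility profile), so that $\CH'$ is indeed a legitimate competitor satisfying (1) and (2); once that is established, the edge-by-edge bookkeeping on the symmetric difference and the appeal to minimality are routine.
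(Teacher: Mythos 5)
Your argument is exactly the paper's: replace $\CH$ by $\CH \oplus Z$, note the utility profile (hence leximax property) is unchanged, observe the distance to $\AH$ strictly decreases, and contradict condition~(3). You have simply spelled out the degree-preservation and symmetric-difference bookkeeping that the paper leaves implicit, so the proposal is correct and follows the same route.
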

    \begin{proof}
    Assume first that there is an alternating cycle, say $D$. Then $\CH \oplus D$ has the same utility profile as $\CH$ and is closer to $\AH$, a contradiction. 
    \end{proof}
    
    So we have only alternating paths. We now make more subtle observations about the edge-disjoint alternating paths in $\AH \oplus \CH$. 
    
    Alternating paths have either even or odd length. We next understand the even length alternating paths.
    Consider an even length alternating path, say $P$. The two endpoints of $P$ have the same kind, either both are goods or both are agents. First, we show that we cannot have an even length alternating paths with both endpoints as goods.
    
    \begin{observation}\label{noEvenLGoods}
    There are no even length heavy alternating paths with both endpoints as goods in the decomposition.
    \end{observation}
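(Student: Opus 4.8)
The plan is to argue exactly as in Observation~\ref{noCycle}: suppose toward a contradiction that the decomposition of $\AH \oplus \CH$ contains an even length heavy alternating path $P$ with both endpoints goods, and build from $P$ a heavy-only allocation of $\abs{\AH}$ goods that is again leximax but strictly closer to $\AH$ than $\CH$, contradicting the minimality of the distance of $\AH$ and $\CH$ in condition~(3).

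First I would fix the combinatorial shape of $P$. Its length is even and hence at least two, and since the two endpoints lie on the same side of the bipartition and are goods, the two edges of $P$ incident to the endpoints carry opposite labels. As each endpoint good has degree one in $\AH \oplus \CH$, one endpoint good is owned in $\AH$ but unowned in $\CH$, and the other is owned in $\CH$ but unowned in $\AH$. Every agent on $P$ is an internal vertex, so it meets exactly two edges of $P$, and because $P$ alternates, one of them lies in $\AH$ and the other in $\CH$.

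Next I would set $D \assign \CH \oplus P$. Since $P$ is an alternating path with respect to $\CH$ consisting of heavy edges, $D$ is again a heavy-only allocation, and $\abs{D} = \abs{\CH} = \abs{\AH}$ because $P$ contains equally many $\AH$-edges and $\CH$-edges. The key point is that $D$ has the same utility profile as $\CH$: along $P$ every agent trades one incident heavy edge for another, so $\deg_H(i, D) = \deg_H(i, \CH)$ for every agent $i$ on $P$, and agents off $P$ are untouched; since in a heavy-only allocation each agent's utility equals $p$ times its heavy degree, $\uv_i(D_i) = \uv_i(\CH_i)$ for all $i$. Hence $D$ is still leximax among all allocations of $\abs{\AH}$ heavy goods, so the pair $(A, D)$ satisfies conditions~(1) and~(2).

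Finally I would compare distances. Because $P$ is one of the alternating paths in $\AH \oplus \CH$, replacing $\CH$ by $D$ moves every $\AH$-edge of $P$ into $D$ (so it leaves the symmetric difference with $\AH$) and deletes every $\CH$-edge of $P$ from $D$ (so it too leaves the symmetric difference), while nothing else changes; thus $D \oplus \AH = (\CH \oplus \AH) \setminus P$, which is strictly smaller since $P \neq \emptyset$. This contradicts condition~(3), so no such path exists. I do not expect a genuine obstacle: the argument is parallel to Observation~\ref{noCycle}, and the only delicate point is the bookkeeping in the first step --- that the two endpoint edges of $P$ have opposite labels and that each agent on $P$ keeps its heavy degree --- which is precisely what makes both ``$D$ is leximax'' and ``$D$ is strictly closer to $\AH$'' go through.
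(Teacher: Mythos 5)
Your proof is correct and follows exactly the approach the paper uses (the paper's proof is the single line that $\CH \oplus P$ has the same utility profile as $\CH$ and is closer to $\AH$); you have simply filled in the bookkeeping that the endpoint edges of an even path carry opposite labels and that each internal agent trades one incident $\CH$-edge for one $\AH$-edge, preserving heavy degree.
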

    \begin{proof}
    If both endpoints are goods, $\CH \oplus P$ has the same utility profile as $\CH$ and is closer to $\AH$, a contradiction. 
    \end{proof}
    
    Assume next that both endpoints are agents, say $i$ and $j$, and that the edge of $P$ incident to $i$ is in $\AH$ and the edge of $P$ incident to $j$ is in $\CH$. Then $\abs{\AH_i} > \abs{\CH_i}$ and $\abs{\CH_j} >\abs{\AH_j}$. 
    
    First, we show that $\abs{\AH_j} \leq \abs{\AH_i} -2$. Then using that, we prove that we can not have any even length alternating path with both endpoints as agents.
		
	\begin{observation}\label{jHasLessHeavyGoodThani}
	$\abs{\AH_j} < \abs{\AH_i}$.
	\end{observation}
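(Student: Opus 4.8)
The plan is to obtain the strict inequality directly from the leximax optimality of $\CH$, by augmenting $\CH$ (rather than $A$) along the path $P$. Since $P$ has even length and its two end-edges are an $\AH$-edge at $i$ and a $\CH$-edge at $j$, the set $\CH \oplus P$ is again a heavy-only allocation, it has exactly $\abs{\AH}$ heavy edges (toggling $P$ removes as many $\CH$-edges as it adds $\AH$-edges), and it differs from $\CH$ only in the heavy degrees of $i$ and $j$: $\deg_H(i)$ increases by one to $\abs{\CH_i}+1$ and $\deg_H(j)$ decreases by one to $\abs{\CH_j}-1$, all other heavy degrees being unchanged. This is precisely the behaviour of $\oplus$ along an even heavy alternating path recorded in Section~\ref{altpath}, read with $\CH$ in the role of $A$.

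Next I would compare the utility profiles of $\CH$ and $\CH \oplus P$. Passing from $\CH$ to $\CH \oplus P$ transfers exactly $p$ units of utility from agent $j$ to agent $i$. Suppose, for contradiction, that $\abs{\CH_j} \ge \abs{\CH_i}+2$. Then the two new utilities $p(\abs{\CH_i}+1)$ and $p(\abs{\CH_j}-1)$ both lie strictly between the old utilities $p\abs{\CH_i}$ and $p\abs{\CH_j}$, and in particular the transfer does not reverse the order of $i$ and $j$; hence it is an order-preserving Robin--Hood transfer and strictly raises the sorted utility profile in the lexicographic order. This contradicts the choice of $\CH$ as a leximax heavy-only allocation on $\abs{\AH}$ goods, so we must have $\abs{\CH_j} \le \abs{\CH_i}+1$.

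Finally I would chain the inequalities. The end conditions of $P$ (as already observed in the text preceding this observation) give $\abs{\AH_i} > \abs{\CH_i}$ and $\abs{\CH_j} > \abs{\AH_j}$, hence $\abs{\CH_i}+1 \le \abs{\AH_i}$, and therefore
$$\abs{\AH_j} < \abs{\CH_j} \le \abs{\CH_i}+1 \le \abs{\AH_i},$$
which is the assertion $\abs{\AH_j} < \abs{\AH_i}$; in fact it yields $\abs{\AH_j} \le \abs{\AH_i}-1$, a first step towards the bound $\abs{\AH_j}\le\abs{\AH_i}-2$ used afterwards.

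The main thing to be careful about is the leximax comparison. One has to confirm that $\CH \oplus P$ really is a legal heavy-only allocation with $\abs{\AH}$ edges --- no good becomes doubly assigned, which holds because every good lying on $P$ is an interior vertex of the path and so is incident to exactly one edge of $\CH \oplus P$ after toggling --- and that the transfer from $j$ to $i$ is order-preserving, so that it genuinely increases the leximax value instead of merely permuting utilities. Once these two points are nailed down, the rest is routine bookkeeping on heavy degrees.
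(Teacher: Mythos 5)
Your proposal is correct and takes essentially the same route as the paper: both arguments rest on the observation that if $\abs{\CH_j}\ge\abs{\CH_i}+2$, then $\CH\oplus P$ would be a heavy-only allocation of cardinality $\abs{\AH}$ with a lexicographically larger utility profile than $\CH$, contradicting the leximax choice of $\CH$. The paper phrases this as a contradiction from $\abs{\AH_j}\ge\abs{\AH_i}$ (from which $\abs{\CH_j}>\abs{\CH_i}+1$ follows via the end-point inequalities), while you establish the bound $\abs{\CH_j}\le\abs{\CH_i}+1$ directly and then chain; these are contrapositive formulations of the same argument.
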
	
	\begin{proof}
	By contradiction.
		Assume first that 
		$\abs{\AH_j} \ge \abs{\AH_i}$.
		Then
		$\abs{\CH_j} > \abs{\CH_i}+1$
		and hence $\CH \oplus P$ is lexicographically larger than $\CH$, a contradiction.
	\end{proof}
		
	\begin{observation}\label{jHasLessHeavyGoodThani-1}
	$\abs{\AH_j} < \abs{\AH_i} -1$.
	\end{observation}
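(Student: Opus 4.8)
The plan is to argue by contradiction, tightening Observation~\ref{jHasLessHeavyGoodThani} by exactly one: that observation already gives $\abs{\AH_j} \le \abs{\AH_i}-1$, so it suffices to rule out $\abs{\AH_j} = \abs{\AH_i}-1$. Assume this equality. Since the edge of $P$ incident to $i$ lies in $\AH \setminus \CH$, we have $\abs{\AH_i} > \abs{\CH_i}$, hence $\abs{\CH_i} \le \abs{\AH_i}-1 = \abs{\AH_j}$; and since the edge of $P$ incident to $j$ lies in $\CH \setminus \AH$, we have $\abs{\CH_j} > \abs{\AH_j} = \abs{\AH_i}-1$, hence $\abs{\CH_j} \ge \abs{\AH_i} \ge \abs{\CH_i}+1$.

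Next I would consider the heavy-only allocation $\CH' \assign \CH \oplus P$. Since $P$ (being an even-length alternating path between two agents) has equally many $A$- and $C$-edges, $\CH'$ is a valid heavy-only allocation with $\abs{\CH'} = \abs{\CH} = \abs{\AH}$, and it differs from $\CH$ only at $i$ (heavy degree $\abs{\CH_i}+1$) and at $j$ (heavy degree $\abs{\CH_j}-1$). If $\abs{\CH_j} \ge \abs{\CH_i}+2$, then passing from $\CH$ to $\CH'$ replaces the heavy-degree entries $\abs{\CH_i}, \abs{\CH_j}$ by $\abs{\CH_i}+1, \abs{\CH_j}-1$ with $\abs{\CH_i} < \abs{\CH_i}+1 \le \abs{\CH_j}-1 < \abs{\CH_j}$, so — exactly as in the proof of Observation~\ref{jHasLessHeavyGoodThani} — the utility profile of $\CH'$ is lexicographically larger than that of $\CH$, contradicting the leximax choice of $\CH$. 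Hence $\abs{\CH_j} = \abs{\CH_i}+1 = \abs{\AH_i}$, so in $\CH'$ agent $i$ has $\abs{\CH_j}$ heavy goods and agent $j$ has $\abs{\CH_i}$ heavy goods; thus $\CH'$ has the same utility profile as $\CH$ and is therefore also leximax among heavy-only allocations of $\abs{\AH}$ heavy goods.

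To close, I would compare distances to $\AH$. Every edge of $P$ is either an $A$-edge — which lies in $\AH$ and, after the augmentation, also in $\CH'$ — or a $C$-edge — which lies in neither $\AH$ nor $\CH'$; all edges outside $P$ are unchanged. Hence $\AH \oplus \CH' = (\AH \oplus \CH) \setminus E(P)$, and since $P$ is a nonempty path this is a proper subset, so $\abs{\AH \oplus \CH'} < \abs{\AH \oplus \CH}$. But then the pair $(A,\CH')$ satisfies conditions (1) and (2) while being strictly closer to $\AH$, contradicting the minimality condition (3). We conclude $\abs{\AH_j} < \abs{\AH_i}-1$.

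The only content beyond Observation~\ref{jHasLessHeavyGoodThani} — and the step I expect to be the main obstacle — is the boundary case $\abs{\CH_j} = \abs{\CH_i}+1$: there $\CH \oplus P$ merely swaps two heavy degrees and is \emph{not} a lexicographic improvement, so the contradiction cannot come from leximaxity and must instead be extracted from the minimum-distance condition~(3); the care needed is in checking that $\CH'$ is a legitimate competitor, i.e.\ still a valid allocation with exactly $\abs{\AH}$ heavy edges and with a utility profile that is still leximax.
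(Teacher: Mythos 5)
Your proof is correct, and it takes a genuinely different (and cleaner) route than the paper's. The paper keeps $\CH$ fixed and modifies the \emph{optimal} allocation: it replaces $\AH$ by $\AH \oplus P$ and then also swaps the goods that are light for $i$ in $A_i$ with the goods that are light for $j$ in $A_j$, arguing that the resulting allocation is still optimal, still has $\CH$ as a leximax competitor of the right cardinality, and is strictly closer to $\CH$ — contradicting~(3). This requires reasoning about light goods and about preservation of optimality after the swap, which is a little delicate (the swapped goods need not be light for their new owners, and one must observe that this can only help). You instead keep $A$ fixed and modify $\CH$ to $\CH' = \CH \oplus P$, which stays entirely inside the world of heavy-only allocations and never touches light goods. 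The price you pay is a short case split — $\abs{\CH_j} \ge \abs{\CH_i}+2$ is killed by leximaxity of $\CH$ (condition~(2), same argument as Observation~\ref{jHasLessHeavyGoodThani}), while the boundary case $\abs{\CH_j} = \abs{\CH_i}+1$ gives a profile-preserving $\CH'$ strictly closer to $\AH$, killed by condition~(3) — but the bookkeeping ($\AH \oplus \CH' = (\AH\oplus\CH)\setminus E(P)$, validity and cardinality of $\CH'$) is all straightforward. Both proofs ultimately derive the contradiction from the minimum-distance condition~(3); yours isolates more explicitly which of the two extremality conditions is doing the work in each subcase.
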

	\begin{proof}
	If 
		$\abs{\AH_j} = \abs{\AH_i} -1$,
		then $\AH\oplus P$ and $\AH$ have the same utility profile with respect to heavy goods. Also $\AH \oplus P$ is closer to $\CH$ than $\AH$. Finally, we swap the goods that are light for $i$ in $A_i$ with the goods that are light for $j$ in $A_j$. The value of the resulting bundle for $i$ or $j$  is at least the value of the other agent's bundle for the other agent in $A$. 
		Thus the resulting allocation is again optimal and with respect to heavy edges, it has the same utility profile as before and is closer to $\CH$, a contradiction.
	\end{proof}
		
	\begin{observation}\label{noEvenLAgents}
	There is no even length heavy alternating path with both endpoints as agents in the decomposition.
	\end{observation}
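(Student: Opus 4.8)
The plan is to argue by contradiction. Suppose the decomposition of $\AH \oplus \CH$ contains an even length heavy alternating path $P$ whose endpoints are agents $i$ and $j$, with the edge of $P$ at $i$ in $\AH$ and the edge at $j$ in $\CH$. By Observations~\ref{jHasLessHeavyGoodThani} and~\ref{jHasLessHeavyGoodThani-1} we have $\deg_H(j,A) \le \deg_H(i,A) - 2$. Note that $P$ is also a heavy alternating path with respect to $A$ (the edges of $P$ lying in $\CH \setminus \AH$ are heavy and hence cannot lie in $A$), it starts with an $A$-edge at $i$, and it ends at $j$; so the lemma forbidding a heavy alternating path that starts with an $A$-edge at an agent whose utility exceeds the other endpoint's by more than $p$ yields $\uv_i(A_i) \le \uv_j(A_j) + p$. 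Writing $\ell_i, \ell_j$ for the numbers of light goods in $A_i, A_j$, we have $\uv_i(A_i) \ge p\deg_H(i,A) \ge p\deg_H(j,A) + 2p$ and $\uv_j(A_j) = p\deg_H(j,A) + \ell_j$, and combining these with the previous inequality forces $\ell_j \ge p$. In particular $A_j$ holds light goods, so $\uv_j(A_j) \le \min(A) + 1$ by Corollary~\ref{light-bundles}.

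Next I would build from $A$ another optimal allocation whose heavy part is $\AH \oplus P$ and which is strictly closer to $\CH$, contradicting the minimality requirement~(3) in the choice of $A$ and $\CH$. First set $A' = A \oplus P$: since $P$ has even length, $\abs{(A')^H} = \abs{\AH}$, and augmenting $P$ lowers $\deg_H(i)$ by one, raises $\deg_H(j)$ by one, leaves all other heavy degrees alone, and moves no light good, so $(A')^H = \AH \oplus P$. Then I would redistribute light goods between $i$ and $j$ to rebalance their utilities: using Lemma~\ref{prop3} one sees that (outside a corner case) the $\ell_i + \ell_j$ light goods presently on $i$ and $j$ are light for both agents, so this pool can be re-split arbitrarily between $i$ and $j$ without touching the heavy part; let $A''$ be the result of putting $s$ of them on $i$, with $s$ chosen to minimise $\abs{\uv_i(A''_i) - \uv_j(A''_j)}$.

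To finish: neither the augmentation nor the re-split changes $\uv_i(\cdot) + \uv_j(\cdot)$, while each pooled light good moved changes $\uv_i(\cdot) - \uv_j(\cdot)$ by exactly $2$; so a short computation using $\deg_H(i,A) - \deg_H(j,A) \ge 2$, $\ell_j \ge p$, $\uv_i(A_i) \le \uv_j(A_j)+p$ and $\uv_j(A_j) \le \min(A)+1$ shows that the attainable values of $\uv_i(A''_i) - \uv_j(A''_j)$ straddle $0$; hence the smallest absolute value one can reach is $0$ or $1$ according to parity, in any case at most $\abs{\uv_i(A_i) - \uv_j(A_j)}$. Since a fixed sum gives a larger product for a smaller spread, $\NSW(A'') \ge \NSW(A)$. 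If this is strict, we contradict the optimality of $A$. If it is an equality, then $A''$ is optimal, $(A'')^H = \AH \oplus P$ still has cardinality $\abs{\AH}$ (so $\CH$ is still leximax among heavy-only allocations of that size), and $\abs{(A'')^H \oplus \CH} = \abs{(\AH \oplus \CH) \oplus P} = \abs{\AH \oplus \CH} - \abs{P} < \abs{\AH \oplus \CH}$ since every edge of $P$ is a symmetric-difference edge of $\AH$ and $\CH$; as $A''$ has exactly as many light edges as $A$, the distance to $\CH$ strictly drops, contradicting~(3). Either way we obtain the contradiction.

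The routine parts are the augmentation along $P$ and the distance bookkeeping. The part I expect to be the main obstacle is the light-good redistribution: verifying that the pooled light goods are light for both $i$ and $j$ (so that $(A'')^H = \AH \oplus P$), that there are enough of them to rebalance the two bundles, and disposing of the corner case $\ell_i = 0$ by a direct argument — there one notes that if $\uv_i(A_i) = \uv_j(A_j) + p$ then $A' = A\oplus P$ is already optimal and closer to $\CH$, and otherwise $\ell_j \ge p+1$, which supplies the slack needed.
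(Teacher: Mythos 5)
Your proof is essentially sound and reaches the same contradiction, but it follows a genuinely different route from the paper. The paper splits on whether $A_j$ contains at least $p$ light goods: if so, it augments along $P$ and moves exactly $p$ light goods from $j$ to $i$, restoring the original utilities and contradicting the minimality condition (3); if not, it derives $\uv_i(A_i) > \uv_j(A_j) + p$ directly from $\abs{\AH_j} \le \abs{\AH_i}-2$ and $\ell_j < p$, moves the $\ell_j < p$ light goods, and obtains a strict $\NSW$ improvement contradicting (1). You instead invoke the (unnamed) lemma stating that no heavy alternating path starting with an $\OPT$-edge at $j$ and ending at $i$ can exist when $\uv_j(\OPT_j) > \uv_i(\OPT_i) + p$; applying it to the optimal $A$ and the path $P$ immediately gives $\uv_i(A_i) \le \uv_j(A_j) + p$, which forces $\ell_j \ge p$ and shows that the paper's second case is actually vacuous --- a nice observation that streamlines the case analysis. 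You then do a balanced re-split of the pooled light goods rather than moving exactly $p$ of them; this also works, and the straddle/parity computation is correct.

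The one place where your write-up is thin is exactly the part you flagged: the corner case $\ell_i = 0$. When $\ell_i = 0$, Lemma~\ref{prop3} cannot be invoked to show that the light goods in $A_j$ are light for $i$, so the re-split might introduce new heavy edges into $(A'')^H$, breaking the claim $(A'')^H = \AH \oplus P$. Your disposal of this corner case is only a sketch: the sub-case $\uv_i(A_i) = \uv_j(A_j) + p$ is fine, but in the sub-case $\uv_i(A_i) < \uv_j(A_j) + p$ the claim that ``$\ell_j \ge p+1$ supplies the slack needed'' does not explain what happens if some of those $\ell_j$ goods are heavy for $i$. One can show by a separate computation that whenever a light good of $j$ is heavy for $i$ and $\ell_i = 0$, the optimality of $A$ forces $\uv_j(A_j) < 2p/(p-1)$, which together with $\ell_j \ge p$ pins the situation down to $p \le 3$; and then moving such a good strictly improves $\NSW$ (after augmenting $P$), giving a contradiction to (1) rather than (3). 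Alternatively, you can avoid the issue altogether, as the paper does, by moving a fixed number of light goods without worrying whether they are light for $i$: if any turns out to be heavy for $i$, the $\NSW$ strictly increases and you contradict optimality of $A$; if all are light, you are closer to $\CH$ and contradict (3). Either fix closes the gap.
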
	
	\begin{proof}
		By Observation \ref{jHasLessHeavyGoodThani-1}, we have $\abs{\AH_j} \le \abs{\AH_i} -2$.
		Consider $\AH \oplus P$. It is closer to $\CH$ than $\AH$. The value of bundle $A_j$ went up by $p$ and the value of bundle $A_i$ went down by $p$. If $A_j$ contains $p$ light goods for $j$, move them to $A_i$. In this way, we obtain an allocation which has at least the $\NSW$ of $A$ and where the allocation of heavy goods is closer to $\CH$. If $A_j$ contains less than $p$ light goods for $j$, then $\uv_i(A_i) > \uv_j(A_j) + p$ and $A_i$ contains no light goods for $i$ by Corollary \ref{altpath}. Figure \ref{fig5} shows the bundles of $i$ and $j$ before applying $\AH \oplus P$. Let $\ell$ be the number of light goods allocated to $j$ under $A$. We take these $\ell$ goods from $j$ and allocate them to $i$. In the resulting allocation $A'$ (which is shown in Figure \ref{fig6}), $\uv_i(A'_i) \geq \uv_i(A_i)-p+\ell$ and $\uv_j(A'_j)=\uv_j(A_j)+p-\ell$. Hence,
		\begin{align*}
		    \uv_i(A'_i)\uv_j(A'_j) &\geq (\uv_i(A_i)-p+\ell)(\uv_j(A_j)+p-\ell) \\
		    &= \uv_i(A_i)\uv_j(A_j)+(p-\ell)(\uv_i(A_i)-\uv_j(A_j) - (p-\ell))\\
		    &> \uv_i(A_i)\uv_j(A_j).
		\end{align*}
        The last inequality holds since $p > \ell$ and $\uv_i(A_i) > \uv_j(A_j) + p$. Since the bundles of other agents are not changed, $\NSW(A') > \NSW(A)$ which is a contradiction.	
	\end{proof}
	
	\begin{SCfigure} [10]
	\begin{tikzpicture}[
        node distance = 0mm,
            every node/.style = {draw, text height=1.5cm, inner sep=2mm, outer sep=0mm, align=center}]
            
            \node (i1) [fill=gray] {p};
            \node (others) [below=of i1, text height=0.1cm, fill=gray] {..};
            \node (i2) [below=of others, fill=gray] {p};
            \node (i3) [below=of i2, fill=gray] {p};
            \node (j1) [right=16mm of i3, fill=gray] {p};
            \node (j2) [above=of j1, text height=0.2cm] {1};
            \node (j3) [above=of j2, text height=0.2cm] {1};
    \end{tikzpicture}
    \caption{Bundles of agents $i$ and $j$ before applying $\AH \oplus P$.}
    \label{fig5}
	\end{SCfigure}
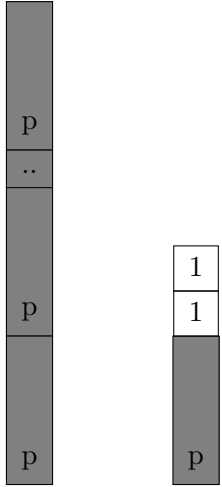	
	
	\begin{SCfigure}[10]
	\begin{tikzpicture}[
        node distance = 0mm,
            every node/.style = {draw, text height=1.5cm, inner sep=2mm, outer sep=0mm, align=center}]
            
            \node (others) [text height=0.1cm, fill=gray] {..};
            \node (i1) [below=of others, fill=gray] {p};
            \node (i2) [below=of i1, fill=gray] {p};
            \node (j2) [above=of others, text height=0.2cm] {1};
            \node (others-light) [above=of j2, text height=0.1cm] {..};
            \node (j3) [above=of others-light, text height=0.2cm] {1};
            \node (j1) [right=16mm of i2, fill=gray] {p};
            \node (i3) [above=of j1, fill=gray] {p};
    \end{tikzpicture}
    
    \caption{Bundles of agents $i$ and $j$ after applying $\AH \oplus P$ and moving the mentioned light goods. Note that in case any of these goods is not light for $i$, $u_i(A_i)$ will be even larger, resulting in an allocation with even better $\NSW$.}
    \label{fig6}
	\end{SCfigure}
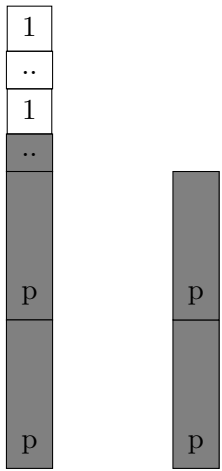	
	
	We now come to the case that we have no even length alternating path. Next we want to show that there is no odd length alternating path.
		
	\begin{observation}\label{noOddL}
	There is no odd length heavy alternating path in the decomposition.
	\end{observation}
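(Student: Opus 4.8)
\emph{Setup.} The plan is to work with the path decomposition of $\AH\oplus\CH$ already set up, using the same two moves as throughout this section: locally repairing $\CH$, or locally repairing $A$. By Observations~\ref{noCycle}, \ref{noEvenLGoods} and \ref{noEvenLAgents} the decomposition has no alternating cycle and no even-length path (an even-length path in a bipartite graph joins two vertices on the same side, hence is good--good or agent--agent, both excluded), so every path in it is an odd-length alternating path joining an agent to a good. For such a path $P$ the two end-edges carry the same label (odd length); call $P$ \emph{$A$-type} if both end-edges lie in $\AH$ and \emph{$C$-type} if both lie in $\CH$. Since $\abs{\AH}=\abs{\CH}$, the graph $\AH\oplus\CH$ has as many $A$-labelled as $C$-labelled edges, so the decomposition contains equally many $A$-type and $C$-type paths. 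Assume for contradiction there is a path; then there is an $A$-type path $P_1$ with endpoints an agent $i_1$ and a good $g_1$, and a $C$-type path $P_2$ with endpoints an agent $i_2$ and a good $g_2$. From how the paths start at agents, $\deg_H(i_1,A)>\deg_H(i_1,\CH)$ and $\deg_H(i_2,\CH)>\deg_H(i_2,A)$, so $i_1\ne i_2$; and $g_1$ is heavy-owned in $A$ but in nobody's bundle in $\CH$ while $g_2$ is the reverse, so $g_1\ne g_2$.

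\emph{Repairing $\CH$.} Put $\CH':=\CH\oplus P_1\oplus P_2$. Counting labelled edges along $P_1$ and $P_2$ shows $\abs{\CH'}=\abs{\CH}=\abs{\AH}$, that $\deg_H(\cdot,\CH')$ agrees with $\deg_H(\cdot,\CH)$ at every agent except $\deg_H(i_1,\CH')=\deg_H(i_1,\CH)+1$ and $\deg_H(i_2,\CH')=\deg_H(i_2,\CH)-1$, and that $\abs{\CH'\oplus\AH}=\abs{\CH\oplus\AH}-\abs{P_1}-\abs{P_2}<\abs{\CH\oplus\AH}$. If $\deg_H(i_1,\CH)<\deg_H(i_2,\CH)$, then passing from $\CH$ to $\CH'$ transfers one heavy good from a richer to a poorer agent, so the utility profile of $\CH'$ is lexicographically at least that of $\CH$; as $\CH$ is leximax among heavy-only allocations of $\abs{\AH}$ heavy goods, the two profiles coincide, hence $\CH'$ is leximax too and is strictly closer to $\AH$, contradicting the minimality (3) in the choice of $\CH$.

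\emph{Repairing $A$.} Otherwise $\deg_H(i_1,\CH)\ge\deg_H(i_2,\CH)$, and combining this with $\deg_H(i_1,A)>\deg_H(i_1,\CH)$ and $\deg_H(i_2,\CH)>\deg_H(i_2,A)$ gives $\deg_H(i_1,A)\ge\deg_H(i_2,A)+2$, hence $\uv_{i_1}(A_{i_1})\ge p\cdot\deg_H(i_1,A)\ge p\cdot\deg_H(i_2,A)+2p$. Now modify $A$: replace its heavy part by $\AH':=\AH\oplus P_1\oplus P_2$, which has the same number of heavy goods as $\AH$, lowers $\deg_H(i_1)$ by one, raises $\deg_H(i_2)$ by one, frees $g_1$ and turns $g_2$ into a heavy good; since we may assume $A$ complete, $g_2$ was assigned in $A$ as a light good to some agent $y$, so delete that assignment and give $y$ the freed good $g_1$. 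As in the proof of Observation~\ref{noEvenLAgents}, $y$'s utility does not drop, the base effect is to decrease $\uv_{i_1}$ by $p$ and increase $\uv_{i_2}$ by $p$, and no other agent changes; if $i_2$ owns at least $p$ light goods in $A$, move $p$ of them to $i_1$, and otherwise $\uv_{i_1}(A_{i_1})\ge p\cdot\deg_H(i_2,A)+2p>\uv_{i_2}(A_{i_2})+p$ and nothing more is needed. In either case, either some reassigned good becomes a new heavy edge, forcing a strict increase of $\NSW$ and contradicting optimality (1) of $A$; or no new heavy edge is created, the resulting complete allocation $A'$ satisfies $\NSW(A')\ge\NSW(A)$ with heavy part exactly $\AH'$, so $A'$ is again optimal but its heavy part has the same cardinality as $\AH$ and is strictly closer to $\CH$, contradicting (3).

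Thus every case yields a contradiction, so the decomposition has no odd-length alternating path. The edge-counting on $P_1,P_2$ and the lexicographic bookkeeping are routine; the real obstacle is the last paragraph, where one must carefully handle the freed good $g_1$, the reclaimed good $g_2$, and the coincidences that may occur among $i_1$, $i_2$, $y$ and the path-neighbours of $g_1$ and $g_2$, checking in each sub-case that $A'$ is a legal complete allocation with non-decreasing $\NSW$ and the intended heavy cardinality — essentially re-running the Observation~\ref{noEvenLAgents} argument with two extra goods in play.
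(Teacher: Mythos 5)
Your proof is correct and follows essentially the same route as the paper: since $|\AH|=|\CH|$ an odd path forces a paired $A$-type/$C$-type pair $(P_1,P_2)$, and one then reruns the even-path argument of Observations~\ref{jHasLessHeavyGoodThani}--\ref{noEvenLAgents} with $P_1\cup P_2$ playing the role of the even path, which is exactly the paper's ``argue as above but with $P$ replaced by $P\cup Q$.'' Your case split on $\deg_H(i_1,\CH)$ versus $\deg_H(i_2,\CH)$ repackages the paper's three-way split but triggers the same two repairs ($\CH$-repair via leximax/minimality, $\AH$-repair via optimality/minimality), and you are somewhat more explicit than the paper about handing the freed good $g_1$ to the light owner of $g_2$ when modifying $A$.
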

	\begin{proof}
	Since $\abs{\AH} = \abs{\CH}$, if there is an odd length alternating path, there must be two odd length alternating path $P$ and $Q$, where $P$ starts and ends with an edge in $\AH$ and $Q$ starts and ends with an edge in $\CH$. The paths $P$ and $Q$ are edge-disjoint. For each path one of the endpoints is an agent and one is a good. Let $i$ be the agent endpoint of $P$ and $j$ be the agent endpoint of $P$. Then 
		$\abs{\AH_i} > \abs{\CH_i}$
		and 
		$\abs{\CH_j} > \abs{\AH_j}$.
 
		We now argue as above but with the even length alternating path replaced by $P \cup Q$. 
	\end{proof}
		

	\begin{proof}[Proof of Theorem \ref{heavysim}]
	Consider$\AH \oplus \CH$. By Observation \ref{noCycle}, we do not have any heavy alternating cycle. Also, by Observations \ref{noEvenLGoods} and \ref{noEvenLAgents}, there is no even length heavy alternating path. Moreover, by Observation \ref{noOddL}, there is no odd length heavy alternating path.
	Hence, we have $\AH = \CH$, and therefore, $\AH$ is leximax among all heavy-only allocation of the same cardinality.
	\end{proof}

	\begin{corollary}
	    Among all allocations maximizing $\NSW$, let $\OPT$ be such that $\OPTH$ is leximax among the heavy parts of allocations maximizing $\NSW$.
	    $\OPTH$ is leximax among all heavy-only allocations $\BH$ with $\abs{\BH} =\abs{\OPTH}$.
	\end{corollary}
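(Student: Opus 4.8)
The plan is to derive the corollary by revisiting the proof of Theorem~\ref{heavysim} with a strengthened choice of the optimal allocation, together with one elementary observation about cardinalities. The observation is this: for any heavy-only allocation $H$ we have $\uv_i(H_i) = p\cdot\deg_H(i,H)$ for every $i$, so $\sum_{i} \uv_i(H_i) = p\,\abs{H}$; hence $\abs{H}$ is determined by the utility profile of $H$. In particular, two heavy-only allocations with the same utility profile have the same cardinality. So it suffices to exhibit an optimal allocation $A$ whose heavy part $\AH$ is \emph{simultaneously} leximax among the heavy parts of all optimal allocations and leximax among all heavy-only allocations of cardinality $\abs{\AH}$: then $\AH$ and $\OPTH$ have the same utility profile, hence $\abs{\AH}=\abs{\OPTH}$, and so $\OPTH$ inherits the leximax property of $\AH$ in the family of heavy-only allocations of cardinality $\abs{\OPTH}$.

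To produce such an $A$, I would re-run the proof of Theorem~\ref{heavysim}, replacing condition~(1) on $A$ by the stronger condition ($1'$): $A$ is an allocation maximizing $\NSW$ whose heavy part $\AH$ is leximax among the heavy parts of all allocations maximizing $\NSW$; conditions~(2) and~(3) are kept, with the distance in~(3) minimized over all pairs satisfying~($1'$) and~(2). One then checks that every step survives. Observations~\ref{noCycle} and~\ref{noEvenLGoods} modify only $\CH$, so they are unaffected, and Observation~\ref{jHasLessHeavyGoodThani} still works since it contradicts the leximaxity of $\CH$ using $\CH\oplus P$, which has the same cardinality $\abs{\AH}$. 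The steps that modify $A$ need a short extra check: in Observation~\ref{jHasLessHeavyGoodThani-1} the modification leaves the heavy utility profile of the allocation unchanged, so ($1'$) is preserved and the contradiction with~(3) goes through verbatim; in Observations~\ref{noEvenLAgents} and~\ref{noOddL}, augmenting along the path rebalances the heavy degrees of two agents whose heavy degrees differ by at least $2$, which \emph{strictly} increases the leximax order of the heavy utility profile, so either the resulting allocation is still optimal and contradicts~($1'$), or the moved light goods fail to restore the utility and one contradicts optimality of $A$ exactly as in the original proof (in fact, if any moved light good is heavy for the receiving agent one gets $\NSW$ strictly larger, and otherwise the new heavy part is precisely that of the augmented allocation). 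This yields an optimal $A$ with $\AH=\CH$, which is therefore leximax among heavy-only allocations of cardinality $\abs{\AH}$ and, by~($1'$), also leximax among the heavy parts of all optimal allocations.

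Finishing is then immediate: $\OPTH$ is by hypothesis leximax among the heavy parts of allocations maximizing $\NSW$, and so is $\AH$, so the two have the same utility profile; by the elementary observation $\abs{\AH}=\abs{\OPTH}$; and since $\OPTH$ has the same utility profile as $\AH$, which is leximax among heavy-only allocations of that cardinality, $\OPTH$ is leximax among all heavy-only allocations $\BH$ with $\abs{\BH}=\abs{\OPTH}$. I expect the only real work to be the verification in the previous paragraph that passing from~(1) to~($1'$) does not break the proof of Theorem~\ref{heavysim} — concretely, confirming that the light-good reallocations in Observations~\ref{jHasLessHeavyGoodThani-1} and~\ref{noEvenLAgents} interact correctly with the new invariant, for which Corollary~\ref{light-bundles} and Lemma~\ref{prop3} control which of the moved light goods can turn into heavy edges and hence how the heavy utility profile may change.
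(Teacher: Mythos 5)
Your proof is correct, and it does more than the paper explicitly spells out: the corollary is stated without proof, and your careful re-run of the Theorem~\ref{heavysim} argument under the strengthened hypothesis~($1'$) is genuinely what is needed. The essential difficulty you correctly isolate is that the optimal allocation $A$ produced by Theorem~\ref{heavysim} might a priori have $\abs{\AH}\neq\abs{\OPTH}$, so the leximax property does not transfer to $\OPTH$ without a further argument; your elementary observation that a heavy-only allocation's utility profile determines its cardinality (since $\sum_i \uv_i(H_i)=p\,\abs{H}$) is exactly what closes this gap once you have an $A$ satisfying~($1'$). Your verification that the theorem's proof survives the strengthening is also sound: Observations~\ref{noCycle}, \ref{noEvenLGoods} and~\ref{jHasLessHeavyGoodThani} only touch $\CH$; in Observation~\ref{jHasLessHeavyGoodThani-1} the heavy utility profile is preserved (with Lemma~\ref{prop3} guaranteeing the swapped light goods remain light for their new owners, so the heavy part is really $\AH\oplus P$), hence~($1'$) holds for the new allocation and~(3) gives the contradiction; and in Observations~\ref{noEvenLAgents} and~\ref{noOddL}, since $\abs{\AH_j}\le\abs{\AH_i}-2$, the augmentation strictly improves the heavy profile in leximax order, so either $\NSW$ strictly increases (contradicting optimality of $A$) or the resulting optimal allocation's heavy part contradicts~($1'$). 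This is the natural way to make the corollary rigorous, and it matches what the paper must have had in mind in presenting it as a consequence of the theorem.
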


	\newcommand{\LG}{\mathit{LG}}
	\begin{lemma} \label{greedy} Let $\OPT$ be an optimal allocation. Then the following allocation is also optimal. Start with $\OPTH$ and then allocate the goods in $\OPT^L$ greedily, i.e., allocate the goods one by one and for each good $g \in \OPT^L$ choose an arbitrary agent $i$ for which $\uv_i(\OPT'_i) = \min (\OPT')$, add edge $(i,g)$ to the current allocation $\OPT'$, and update $\OPT'$.  
	\end{lemma}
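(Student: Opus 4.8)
The plan is to show that the greedy allocation $\OPT'$ satisfies $\NSW(\OPT') \ge \NSW(\OPT)$; since $\OPT$ is optimal, this forces $\NSW(\OPT') = \NSW(\OPT)$, which is exactly the claim. We may assume $\NSW(\OPT) > 0$ (otherwise every allocation is optimal), so $\min(\OPT) \ge 1$.

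The heart of the proof is the claim that \emph{greedy never inserts a good of $\OPT^L$ along a heavy edge}; equivalently, every good it places raises the chosen agent's utility by exactly $1$. I would argue by contradiction, taking the first step at which some $g \in \OPT^L$ is assigned to an agent $j$ with $(j,g)$ heavy. Since $g$ is light in $\OPT$ for its owner $k_0 \ne j$ but heavy for $j$, Lemma~\ref{prop3} and the optimality of $\OPT$ force $\OPT_j$ to contain no good light for $j$, so $\uv_j(\OPT_j) = p\,\deg_H(j,\OPT) = \uv_j(\OPT^H_j)$. As $g$ lands on a current-minimum agent, at that moment every agent has value $\ge \uv_j(\OPT^H_j) = \uv_j(\OPT_j) \ge \min(\OPT)$, while the current partial allocation consists of $\OPT^H$ together with $t$ further goods inserted along light edges, where $t \le |\OPT^L|-1$ because $g$ is not yet placed. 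If $\uv_j(\OPT_j) > \min(\OPT)$, then every agent currently has value $\ge \min(\OPT)+1$, so $t$ is at least the number of unit steps needed to raise every agent's value in $\OPT^H$ to at least $\min(\OPT)+1$; by Corollary~\ref{light-bundles} (light goods in $\OPT$ sit only on agents of value $\min(\OPT)$ or $\min(\OPT)+1$) this number is at least $|\OPT^L|+1$, a contradiction. If $\uv_j(\OPT_j) = \min(\OPT)$, the same counting only yields that the number $m_0$ of unit steps needed to raise every agent in $\OPT^H$ to value at least $\min(\OPT)$ is at most $|\OPT^L|-1$; by Corollary~\ref{light-bundles}, $m_0 = |\OPT^L|$ unless some agent has value exactly $\min(\OPT)+1$ and owns a light good, so we obtain such an agent $k_1$, and then a one- or two-step reallocation inside $\OPT$ --- first, if necessary, move a light good from $k_1$ to $k_0$ without changing $\NSW$, then move $g$ from $k_0$ to $j$ along the heavy edge --- strictly increases $\NSW$, contradicting optimality. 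Carrying out this case analysis, in particular the reallocation step and the degenerate case $p=2,\ \min(\OPT)=2$, is the main obstacle.

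Granting the claim, greedy is merely distributing the $|\OPT^L|$ light goods, each worth $1$, onto the fixed base profile $c_k := \uv_k(\OPT^H_k)$ by repeatedly incrementing a coordinate of currently minimum value. This is the standard greedy for maximizing the separable concave function $\sum_k \log(c_k + x_k)$ subject to $x \ge 0$ and $\sum_k x_k = |\OPT^L|$: incrementing a minimum coordinate is incrementing the coordinate of largest marginal gain, so the output maximizes $\prod_k(c_k+x_k)$ over all such integer $x$ (equivalently, by an exchange argument, if $c_a+x_a > c_b+x_b+1$ for an agent $a$ with $x_a>0$ then moving one unit from $a$ to $b$ strictly increases the product). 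Since $\OPT$ itself realizes one such distribution --- writing $x^{\OPT}_k$ for the number of $\OPT^L$-goods it gives to agent $k$, we have $\uv_k(\OPT_k) = c_k + x^{\OPT}_k$ and $\sum_k x^{\OPT}_k = |\OPT^L|$ --- it follows that $\NSW(\OPT') \ge \NSW(\OPT)$, and hence $\OPT'$ is optimal.
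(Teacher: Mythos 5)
Your proposal takes a genuinely different route from the paper's. The paper does not run greedy from scratch at all: it looks at the set of light-edge placements already present in $\OPT$, considers orderings of those placements, and shows by an exchange argument (using Corollary~\ref{light-bundles}: a light bundle has value $\min(\OPT)$ or $\min(\OPT)+1$, so if a good is placed out of order on an agent that is not currently minimum, a later light good destined for the current-minimum agent $j$ can be swapped ahead) that the ordering closest to greedy actually \emph{is} greedy. This sidesteps entirely the question of whether greedy might ever put a good on a heavy edge, because the paper only ever assigns goods to their $\OPT$-owners. You instead run greedy from $\OPTH$ and prove two things: (i) greedy never assigns a good in $\OPT^L$ along a heavy edge, and (ii) the resulting balanced unit-increment distribution maximizes $\prod_k(c_k+x_k)$, so it dominates $\OPT$. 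What each buys: the paper's argument is shorter and cleaner, but as written it only exhibits \emph{one} greedy sequence that realizes $\OPT$, leaving the ``arbitrary agent'' part of the lemma to the implicit (and standard) fact that all light-only greedy sequences share a profile. Your version proves the statement for an arbitrary greedy choice directly, but it has to confront the heavy-edge case head-on, which is where your proposal currently has a hole.

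The hole is exactly where you flag it: the case $\uv_j(\OPT_j)=\min(\OPT)$ and $\uv_{k_0}(\OPT_{k_0})=\min(\OPT)$ with a reallocation via $k_1$. To close it you need: (a) $k_1$ exists because $m_0\le t\le|\OPT^L|-1$ and $|\OPT^L|-m_0=|\{i:\uv_i(\OPT_i)=\min(\OPT)+1,\ i\text{ holds a light good}\}|$; (b) $k_1\notin\{j,k_0\}$ by comparing values; (c) the light good $h\in\OPT_{k_1}$ is in fact light for $k_0$ — this is not automatic, you must invoke Lemma~\ref{prop3} (since $k_0$ holds the light good $g$, $h$ cannot be heavy for $k_0$); and (d) the two-step move $h:k_1\to k_0$, then $g:k_0\to j$ multiplies the product by $(\min(\OPT)+p)/(\min(\OPT)+1)>1$, which is strict for every $p>1$ and $\min(\OPT)\ge1$ — so the ``degenerate case $p=2,\min(\OPT)=2$'' you worried about is not actually problematic once you do the move in two steps rather than moving $g$ directly. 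Also, your case split should be on the current value $\uv_j(\OPT'_j)$ rather than on $\uv_j(\OPT_j)$ (they coincide only if greedy has put nothing on $j$); if $\uv_j(\OPT'_j)>\min(\OPT)$ you land in your first case regardless of $\uv_j(\OPT_j)$. With these repairs the argument goes through, but as submitted it is an outline with acknowledged gaps rather than a complete proof.
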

	
	\begin{proof}
	Consider a sequence of assigning the goods in $\OPT^L$ which results in $\OPT$ and is closest to greedily assigning a good to an agent with minimum utility. Assume that at some point with partial allocation $X$, we assign a good $g$ to an agent $i$ but there is another agent $j$ with minimum utility such that $u_j(X_j)<u_i(X_i)$. Note that $u_i(X_i \cup \{g\}) > u_j(X_j)+1$ and $i$ has a light good. By Corollary \ref{light-bundles}, $u_i(\OPT_i) - u_j(\OPT_j) \leq 1$. Hence, after assigning $g$ to $i$, a light good $h$ should be assigned to $j$ as well. First assigning $h$ to $j$ and then $g$ to $i$ makes the sequence closer to a greedy sequence which is a contradiction. Therefore, $u_i(X_i)$ is minimum and the sequence is in fact greedily assigning goods to agents with minimum utility.
	\end{proof}
	
    Before we go on to explain the algorithm, it is worth mentioning why our approach does not work when $p$ is not an integer. Theorem \ref{heavysim} is not true if $p$ is half-integer, say $p = 3/2$. Consider an instance with two agents, two goods that are heavy for both agents and three goods that are light for both agents. In the optimal allocation, both agents have bundles of value 3. The bundle of one agent contains the two goods that are heavy for her and the bundle of the other agents contains three goods that are all light for her. The heavy part of this allocation is not leximax among all allocations in which two goods are allocated as heavy goods.

\section{Algorithm}
    In this section we elaborate the algorithm. Our algorithm operates in $3$ phases. The first phase finds a heavy-only allocation which maximizes the $\NSW$. This phase is equivalent to maximizing $\NSW$ in a binary instance. Barman et al.\ \cite{barman2018binarynsw} proved that this is possible in polynomial time.
    
    In the second phase, we greedily allocate the remaining goods (one by one) to an agent with minimum utility. Note that all these goods are light for all agents. Otherwise, the output of the first phase does not maximize $\NSW$ among all heavy-only allocations. This is why this phase is called ``allocating light goods''.
    
    The third phase is the most technical one in which we reshuffle some of the goods. More precisely, we take a heavy good from the bundle of an agent with maximum utility and allocate it to an agent with minimum utility as long as $\NSW$ increases. We later show in Lemma \ref{importantLemma} that the reallocated goods are light for their new owners. This means, as long as there is a progress, we turn some of the heavy goods into light goods.
    
    \paragraph{Phase 1:} Heavy-Only Allocations
    
    As a first step we concentrate on heavy-only allocations. We first show how to compute a heavy-only allocation maximizing $\NSW$. We use our own words to describe Algorithm $\ref{alg1}$ in \cite{barman2018binarynsw}. 
	
	In order to compute a heavy-only allocation maximizing $\NSW$, we start with a heavy-only allocation $A$ of maximum cardinality, i.e., in $A$ any good that is heavy for at least one agent is assigned to an agent for which it is heavy. We then improve the $\NSW$ of $A$ by augmentation of some heavy alternating paths. As long as there is a heavy even-length alternating path $P$ connecting agent $i$ to agent $j$, starting with an edge outside $A$ and ending with an edge in $A$, and with the heavy degree of $j$ at least two larger than the heavy degree of $i$ in $A$, we augment $P$ to $A$, i.e., we update $A$ to $A \oplus P$. When the process stops, $A$ maximizes $\NSW$. 

	\begin{algorithm} 
		\caption{BinaryMaxNSW} \label{alg1}
		Input : $N, M, v = (\uv_1, ... , \uv_n)$ 
		
		Output: allocation $A$
		
		\begin{algorithmic}
			\State let $G$ be the corresponding graph
			\State find the maximum multi-matching $A$
			\While {there is an alternating path $a_0, g_1, ... , g_k, a_k$ s.t $|A_0| \le |A_k| - 2$}
			\For {$\ell \leftarrow k$ to $1$}
			\State $A \leftarrow A \backslash (a_{\ell}, g_{\ell})$
			\State $A \leftarrow A \cup (a_{\ell-1}, g_{\ell})$
			\EndFor		
			\EndWhile
			return $A$
		\end{algorithmic}
		
	\end{algorithm}
	Barman et al.\ showed in \cite{barman2018binarynsw} that Algorithm~\ref{alg1} outputs an allocation with maximum $\NSW$. Furthermore, Halpern et al.\ proved in \cite{HalpernPPS20} that in binary instances the set of leximax allocations is identical to the set of allocations with maximum $\NSW$.
	
	\begin{theorem} \label{binaryNSW}
		Having an instance with only heavy edges as an input, Algorithm \ref{alg1} outputs an allocation with maximum Nash Social Welfare. Furthermore, an optimal allocation $\OPT$ is leximax and hence the utility profile of the optimal allocation is unique. 
	\end{theorem}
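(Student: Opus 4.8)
The plan is to reduce the heavy-only instance to a binary one and then invoke the two results cited just above. Since every edge of the input graph has weight $p$, for any heavy-only allocation $A$ we have $\uv_i(A_i) = p\cdot\deg_H(i,A)$, hence $\NSW(A) = p\cdot(\prod_{i\in N}\deg_H(i,A))^{1/n}$. Maximizing $\NSW$ over heavy-only allocations is therefore exactly the problem of maximizing $(\prod_i |A_i|)^{1/n}$, i.e.\ computing a max-Nash-welfare allocation for the binary instance on the graph $(N\cup M,\EH)$ in which every present edge has value $1$. First I would make this correspondence precise: goods with no incident heavy edge cannot appear in a heavy-only allocation and can be discarded, and the tie-breaking convention for agents of utility $0$ (first maximize the number of agents with positive utility, then the product among them) is the same on both sides --- this is the convention under which Halpern et al.'s leximax characterization is stated.

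Given the reduction, the first sentence of the theorem is immediate: Algorithm~\ref{alg1} is precisely the augmenting-path algorithm of Barman et al.\ run on this binary instance (start from a maximum-cardinality multi-matching, then repeatedly augment an even-length alternating path that moves a good from an agent of heavy degree $d$ to an agent of heavy degree at most $d-2$), so by \cite{barman2018binarynsw} its output maximizes $\prod_i |A_i|$ and hence $\NSW$. If one wanted a self-contained argument instead of the citation, the two points to establish are: (i) termination --- the non-increasing sorted degree vector strictly decreases lexicographically at each augmentation (equivalently $\sum_i \deg_H(i,A)^2$ strictly drops), so there are only polynomially many steps and each step is a polynomial-time path search; and (ii) local optimality implies global optimality --- if $A$ admits no balancing augmenting path then $A$ maximizes $\prod_i |A_i|$ among maximum-cardinality multi-matchings. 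For (ii) I would take any allocation $B$ of the same (maximum) cardinality, decompose $A\oplus B$ into alternating paths and cycles, observe that any component which strictly increased the product would contain a sub-path that is a balancing augmenting path for $A$, and finish with the fact that $\prod_i |A_i|$ is Schur-concave (so a Robin-Hood transfer from a higher-degree agent to a lower-degree one never decreases it), giving $\NSW(A)\ge\NSW(B)$.

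For the second sentence I would invoke Halpern et al.~\cite{HalpernPPS20}: for binary additive valuations the max-Nash-welfare allocations are exactly the leximax allocations, and a leximax allocation has a uniquely determined utility profile; pushing this back through the scaling shows every optimal heavy-only allocation $\OPT$ is leximax and all optimal allocations share a single utility profile. A self-contained version is again the same majorization statement: using the existence of balancing exchanges (augmenting paths), the leximax sorted-ascending degree vector among maximum-cardinality multi-matchings is majorized by every other achievable degree vector, hence it is the unique maximizer of the Schur-concave function $\prod_i |A_i|$, and conversely any maximizer attains it.

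The main obstacle is step (ii): converting the purely local stopping condition of Algorithm~\ref{alg1} into a statement about the global optimum, i.e.\ proving that the absence of a balancing augmenting path forces leximax (equivalently, product-maximal) degrees among all maximum-cardinality multi-matchings. This is the real combinatorial content behind both cited results; the scaling reduction, the termination bound, and the passage from "product of heavy degrees" to "$\NSW$" are bookkeeping, and it is precisely what lets us cite \cite{barman2018binarynsw,HalpernPPS20} here rather than reprove them.
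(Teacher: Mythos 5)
Your proposal takes essentially the same route as the paper: the paper does not give a proof of Theorem~\ref{binaryNSW} at all, but simply appeals (in the paragraph immediately preceding the statement) to Barman et al.~\cite{barman2018binarynsw} for the first claim and to Halpern et al.~\cite{HalpernPPS20} for the leximax characterization, exactly as you do. Your explicit scaling reduction ($\uv_i(A_i)=p\cdot\deg_H(i,A)$), your flag about the zero-utility tie-breaking convention, and your self-contained sketch via Schur-concavity and symmetric-difference decomposition are all correct extra detail the paper leaves implicit, and they are consistent with what the citations actually establish.
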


	Before proceeding to the next phase, we briefly explain how to get leximax heavy-only (partial) allocations of different cardinalities. The heavy-only allocation maximizing $\NSW$ is a maximum cardinality heavy-only allocation. In order to compute heavy-only allocations of smaller cardinality, we repeatedly remove an edge from $A$. We take any bundle $A_i$ with $\uv_i(A_i) = \max_j \uv_j(A_j)$ and remove an edge of $A$ incident to $i$. In this way, we will obtain optimal allocations for every cardinality. 
	
	\begin{lemma}\label{lemmaOne}
	    Let $\CH \subseteq \EH$ be leximax among all allocations $D^H \subseteq \EH$ with $|D^H|= |\CH|$. Let $(c_1, c_2, \dots , c_n)$ be the utility profile of $\CH$ and let $t$ be such that $c_{t-1} < c_t = \dots = c_n$. Then allocation $\hat{C}^H$ with utility profile $(\hat{c}_1 , \dots , \hat{c}_n) = (c_1, \dots, c_{t-1}, c_t - p, c_{t+1}, \dots , c_n)$, is leximax among all heavy-only allocations with the same cardinality. 
	\end{lemma}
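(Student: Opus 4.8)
The plan is as follows. First I would normalize the setup: in any heavy‑only allocation $D^H$ every edge is heavy, so $\uv_i(D^H_i)=p\cdot\deg_H(i,D^H)$, and the utility profile of $D^H$ is $p$ times a non‑decreasing integer vector whose coordinates sum to $\abs{D^H}$. Writing $c_i=p\gamma_i$ with $\gamma_1\le\dots\le\gamma_n$ and $\sum_i\gamma_i=\abs{\CH}$, the hypothesis $c_{t-1}<c_t$ becomes $\gamma_{t-1}\le\gamma_t-1$ (these are integers), so deleting a single heavy edge incident to an agent of value $c_n=c_t$ from $\CH$ yields a heavy‑only allocation $\hat{C}^H$ of cardinality $\abs{\CH}-1$ whose sorted utility profile is exactly $(c_1,\dots,c_{t-1},c_t-p,c_{t+1},\dots,c_n)$, because $c_t-p=p(\gamma_t-1)$ slots in at position $t$. (If $\abs{\CH}=0$ the lemma is vacuous.) Hence it suffices to show that no heavy‑only allocation $B^H$ with $\abs{B^H}=\abs{\CH}-1$ has utility profile $\succ_{\mathit{lex}}(\hat c_1,\dots,\hat c_n)$.

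The key intermediate claim I would prove next is that such a $B^H$, if it existed, would in fact beat $\CH$ itself, i.e.\ its profile would be $\succ_{\mathit{lex}}(c_1,\dots,c_n)$. Write the profile of $B^H$ as $p\beta$ with $\beta$ non‑decreasing and $\sum_j\beta_j=\abs{\CH}-1$, and let $s$ be the first index where $\beta$ disagrees with $\gamma$ (equivalently with the vector underlying $(\hat c_1,\dots,\hat c_n)$, since $\gamma$ and that vector agree in positions $<t$). A short counting argument rules out $s\ge t$: if $s=t$ then $\beta_t\ge\gamma_t$ and monotonicity of $\beta$ force $\sum_j\beta_j\ge\sum_j\gamma_j=\abs{\CH}$; if $s>t$ then $\beta_t=\gamma_t-1$ but $\beta_s\ge\gamma_t+1$, and monotonicity again forces $\sum_j\beta_j\ge\abs{\CH}$; either way this contradicts $\sum_j\beta_j=\abs{\CH}-1$. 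So $s<t$, whence $\beta_j=\gamma_j$ for $j<s$ and $\beta_s>\gamma_s$, which gives $p\beta\succ_{\mathit{lex}}(c_1,\dots,c_n)$.

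Finally I would upgrade $B^H$ to a cardinality‑$\abs{\CH}$ allocation without lowering its profile, contradicting the leximaxity of $\CH$ at its own cardinality. Consider the alternating‑path decomposition of $\CH\oplus B^H$ from Section~\ref{altpath}. Since $\abs{\CH}=\abs{B^H}+1$, the number of $\CH$‑only edges exceeds the number of $B^H$‑only edges by exactly one; cycles and even‑length paths are balanced, so there is an odd‑length heavy alternating path $Q$ whose two end‑edges both lie in $\CH$. Its good endpoint has degree one in $\CH\oplus B^H$, so it is owned in $\CH$ but unowned in $B^H$; therefore $\tilde B:=B^H\oplus Q$ is again a valid heavy‑only allocation, now of cardinality $\abs{\CH}$, obtained from $B^H$ by handing one extra heavy good to the agent endpoint of $Q$ and leaving every other bundle untouched. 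Increasing one entry of a bundle cannot lexicographically decrease a utility profile (every order statistic of the multiset weakly increases when one element increases), so the profile of $\tilde B$ is $\succeq_{\mathit{lex}}$ that of $B^H$, hence $\succ_{\mathit{lex}}(c_1,\dots,c_n)$ by the previous paragraph — contradicting that $\CH$ is leximax among heavy‑only allocations of cardinality $\abs{\CH}$. This contradiction shows $\hat{C}^H$ is leximax among all heavy‑only allocations of cardinality $\abs{\CH}-1$.

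The step I expect to be the most delicate is the counting argument that promotes ``$\succ_{\mathit{lex}}(\hat c_1,\dots,\hat c_n)$'' to ``$\succ_{\mathit{lex}}(c_1,\dots,c_n)$'', since it hinges on exactly where the single ``missing'' heavy good can sit relative to the run of maximal coordinates $\gamma_t=\dots=\gamma_n$. The only other point needing care is checking that $B^H\oplus Q$ is a legal allocation, which reduces to the good endpoint of $Q$ being a degree‑one vertex of $\CH\oplus B^H$ and therefore unassigned in $B^H$; the monotonicity of order statistics under a single‑coordinate increase is routine but should be stated explicitly.
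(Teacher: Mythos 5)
Your proof is correct and follows essentially the same plan as the paper's: split on whether the first index of disagreement lies below $t$ or at/above $t$, dispatch the ``$\geq t$'' case by counting (the tail $(c_t-p,c_t,\dots,c_t)$ is already sum‑optimal), and turn the ``$<t$'' case into a contradiction with the leximaxity of $\CH$ by upgrading the competing allocation to cardinality $\abs{\CH}$. Two small remarks. First, the parenthetical ``equivalently with the vector underlying $(\hat c_1,\dots,\hat c_n)$'' is not literally true: $\gamma$ and $\hat\gamma$ agree only below $t$, so ``first disagreement with $\gamma$'' and ``first disagreement with $\hat\gamma$'' can differ when that index is $\geq t$; your subsequent case analysis in fact uses $\hat\gamma$ throughout (e.g.\ ``$s>t$ then $\beta_t=\gamma_t-1$'' only follows from agreement with $\hat\gamma$), so you should phrase $s$ as the first disagreement with $\hat\gamma$. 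Second, the alternating‑path augmentation is more machinery than is needed: since $\abs{B^H}<\abs{\CH}\le\abs{\EH}$, some good used by $\CH$ along a heavy edge is unallocated in $B^H$, and simply assigning it along any heavy edge already produces a heavy‑only allocation of cardinality $\abs{\CH}$ whose profile coordinatewise dominates that of $B^H$; this is exactly the shortcut the paper takes. Your route via $\CH\oplus B^H$ is nonetheless correct.
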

	 
	\begin{proof}
	    Let $\hat{D}^H$ be leximax among all allocations with $|\hat{C}^H|$ many heavy allocated goods. Let $(\hat{d}_1, \dots, \hat{d}_n)$ be the utility profile of $\hat{D}^H$. Note that since $|\hat{D}^H| = |\hat{C}^H| < |\CH|$, there exists a good $g$ that is unallocated under $\hat{D}^H$ and is of value $p$ for some agent $a$.
		
		Consider the smallest $i$ such that $\hat{d}_i \not= \hat{c}_i$. Then $\hat{d}_i > \hat{c}_i$ since $\hat{D}^H$ is leximax. If $i < t$, allocating $g$ to agent $a$ results in an allocation $D^H$ with $|D^H| = |\CH|$ which is lexicographically larger than $\CH$. This contradicts the choice of $\CH$.
		
		So  $ \hat{d}_i = \hat{c}_i$ for all $i < t$ and hence
		$ \Sigma_{i = t}^{n} \hat{d}_i = \Sigma_{i = t}^{n} \hat{c}_i = (n-t+1)c_n - 1$. Since  $(c_n - 1, c_n, \dots , c_n)$ is leximax among all $(n-t+1)$ tuples with sum $(n-t+1)c_n -1$,
		$ (\hat{d}_t , \dots , \hat{d}_n) \preceq_{lex} (c_n - 1, \dots , c_n) = (\hat{c}_{t} , \dots , \hat{c}_n)$. Hence $\hat{C}^H$ is leximax among all allocations $\hat{D}^H$ with $|\hat{D}^H| = |\hat{C}^H|$.  
	\end{proof}
	
	\begin{corollary} \label{subset}
	    Let $(p \cdot a_1, ... , p \cdot a_n)$ and $(p \cdot b_1, ... , p \cdot b_n)$ be the utility profile of heavy-only allocations $A$ and $B$. Note that $\Sigma_{i=1}^n a_i = |\AH|$ and $ \Sigma_{i=1}^n b_i = |\BH|$ and let $|\AH| \leq |\BH|$. If the utility profile of $A$ is leximax among all the utility profiles of heavy-only allocations $C$ with $|C| = |A|$ and same holds for $B$, then for all $1 \leq i \leq n: a_i \leq b_i$ .
	\end{corollary}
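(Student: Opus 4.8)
The plan is to reduce everything to a repeated application of Lemma~\ref{lemmaOne}. First I would record the elementary but essential fact that in any heavy-only allocation every agent's utility is $p$ times its heavy degree, hence a nonnegative multiple of $p$; consequently, whenever $c_{t-1} < c_t$ in a sorted utility profile of a heavy-only allocation, we in fact have $c_{t-1} \le c_t - p$. This means the vector $(c_1,\dots,c_{t-1},\,c_t - p,\,c_{t+1},\dots,c_n)$ produced in Lemma~\ref{lemmaOne} is still sorted in non-descending order and differs from the original profile in exactly one coordinate, which it decreases by $p$ (and never increases any coordinate).

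Next I would introduce, for each $k$ from $0$ up to the maximum size of a heavy-only allocation, the unique lexicographically largest utility profile $\phi(k)$ attained by a heavy-only allocation of cardinality $k$; uniqueness is automatic, since we are taking the lexicographic maximum of a finite set of vectors. By hypothesis the utility profile of $A$ equals $\phi(\abs{\AH})$ and that of $B$ equals $\phi(\abs{\BH})$. Lemma~\ref{lemmaOne}, combined with the multiple-of-$p$ observation, says precisely that $\phi(k-1)$ is obtained from $\phi(k)$ by subtracting $p$ from one of its maximum entries; in particular $\phi(k-1) \le \phi(k)$ componentwise. To chain Lemma~\ref{lemmaOne} I only need that its output is again leximax of the appropriate cardinality (which is exactly its conclusion) and that a heavy-only allocation of every intermediate cardinality exists, which holds because one can always delete an edge from a heavy-only allocation and $B$ itself has cardinality $\abs{\BH} \ge \abs{\AH}$.

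Then I would simply iterate the inequality from $k = \abs{\BH}$ down to $k = \abs{\AH}$, obtaining $\phi(\abs{\AH}) \le \phi(\abs{\BH})$ componentwise. Writing $\phi(\abs{\AH}) = (p a_1,\dots,p a_n)$ and $\phi(\abs{\BH}) = (p b_1,\dots,p b_n)$, this reads $p a_i \le p b_i$ for every $i$, hence $a_i \le b_i$, as claimed. The case $\abs{\AH} = \abs{\BH}$ is immediate, since then both profiles equal $\phi(\abs{\AH})$.

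I do not expect a serious obstacle here; the point that needs care is the ``multiple of $p$'' remark, since it is what guarantees that the single-coordinate modification in Lemma~\ref{lemmaOne} both keeps the profile sorted and leaves it pointwise-dominated by the original profile --- without it the reduction step could conceivably raise some coordinate. A secondary point to state cleanly is that ``the leximax profile of a given cardinality'' is a genuine function with a unique value, so that the profile of $A$ and the profile of the iteratively reduced $B$ are literally the same vector, not merely two different vectors that happen to both be lexicographically optimal.
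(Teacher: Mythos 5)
Your argument is correct and follows essentially the same route as the paper's: both iterate Lemma~\ref{lemmaOne} to step the cardinality of $B$ down to $\abs{\AH}$, observing that each step subtracts $p$ from a single coordinate so the resulting profile is componentwise dominated by the original, and then invoke uniqueness of the leximax profile to conclude it coincides with that of $A$. Your explicit remark that all utilities in a heavy-only allocation are multiples of $p$ (so the single-coordinate decrement keeps the profile sorted and gives pointwise domination) is a useful clarification of a step the paper leaves implicit when it asserts $\hat{b}_i \le b_i$.
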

		
	\begin{proof}
		Keep removing goods from the bundle with maximum utility and minimum index in $B$ until we reach an allocation $\hat{B}$ with $|\hat{B}^H| = |\AH|$. By Lemma \ref{lemmaOne}, $(p \cdot \hat{b}_1, ..., p \cdot \hat{b}_n) \succeq_{lex} (p \cdot a_1, ... , p \cdot a_n)$ and therefore $(\hat{b}_1, ..., \hat{b}_n) = (a_1, ... , a_n)$. The fact that $\hat{b}_i \leq b_i$ for all $i \in [n]$, completes the proof.
	\end{proof}
	
	
	\paragraph{Phase 2:} Allocating Light Goods 
	
	As long as there is an unallocated good, allocate it to an agent with minimum utility.
	
	\paragraph{Phase 3:}  Increasing $\NSW$
	
	As long as $\NSW$ increases, take a good from an agent with maximum utility and give it to an agent with minimum utility. See Algorithm \ref{alg2}.

	\begin{algorithm}
		\caption{TwoValueMaxNSW} \label{alg2}
		Input : $N, M, \uv = (\uv_1, ... , \uv_n)$ 
		
		Output: allocation $A$
		\begin{algorithmic}[1]
			\State  $\backslash\backslash$ Phase 1
			\State let $\uv'_i: 2^M \rightarrow \mathbb{N}$ be an additive function and $\uv'_i(g) = \floor{\uv_i(g) / p}$ for all agents $i$ and all goods $g$
			\State $\uv' \leftarrow (\uv'_1, \dots , \uv'_n)$
			\State $A = $BinaryMaxNSW$(N, M, u')$
			
			\State  $\backslash\backslash$ Phase 2
			\While {there is an unallocated good $g$}
			\State let $\uv_1(A_1) \le \uv_2(A_2) \le ... \le \uv_n(A_n)$
			\State let $k$ be the maximum index s.t $\uv_k(A_k) = \uv_1(A_1)$
			\State $A \leftarrow A \cup \{(k,g)\}$
			\EndWhile		
			
			\State  $\backslash\backslash$ Phase 3
			
			\While{$ \uv_n(A_n) >  p \cdot \uv_1(A_1) + p $}
			\State let $k$ be the maximum index s.t $\uv_k(A_k) = \uv_1(A_1)$
			\State let $t$ be the minimum index s.t $\uv_t(A_t) = \uv_n(A_n)$
			\State let $g$ be a good such that $(t,g) \in A$
			\State $A \leftarrow A \backslash \{(t,g)\}$
			\State $A \leftarrow A \cup \{(k,g)\}$   
			\EndWhile
			\State return $A$
		\end{algorithmic}
	\end{algorithm}

	\section{Correctness}

    Phase 1 already gives us an optimal allocation $X$ of the heavy only goods which is also leximax on the allocation of the heavy only goods. In phase 2, we allocate the small valued goods as ``evenly'' as possible. The only reason why our solution may not be optimal is that the number of heavy goods in an optimal allocation $Y$ may be less than that in $X$. However, if this is the case, then we can move from $X$ to $Y$ by making small local improvements in Nash social welfare by moving heavy goods from one bundle to the other (captured by Theorem~\ref{oneCase} and Corollary~\ref{meetingOPT}).
    
	Let $A$ be the allocation that Algorithm \ref{alg2} outputs. First we prove that there is an allocation $\OPT$ with maximum $\NSW$ such that the utility profile of $\OPT^H$ and $\AH$ are the same. Then we prove that in allocation $\OPT$, the remaining goods are allocated the same way as in $A$. 
	
	We start with showing some invariants of Algorithm \ref{alg2}.

 \begin{lemma}\label{importantLemma} Fix a numbering of the agents at the beginning of phase 3 such that $\uv_1(A_1) \le \uv_2(A_2) \le \ldots \le \uv_{n-1}(A_{n-1}) \le \uv_n(A_n)$. During phase 3, the following holds. 
  \renewcommand{\theenumi}{\textrm{\alph{enumi}}}

  \begin{enumerate}
    \item \label{neatOrder} The ordering $\uv_1(A_1) \le \uv_2(A_2) \le \ldots \le \uv_{n-1}(A_{n-1}) \le \uv_n(A_n)$ is maintained.
    \item \label{oneStep} If $A_i$ contains a good that is light for $i$, then $\uv_i(A_i) \le \uv_1(A_1) + 1$. 
    \item \label{lexmax} $\AH$ is leximax among all heavy-only allocations of the same cardinality. 
    \item Whenever a good is moved in phase 3, say from bundle $A_t$ to bundle $A_k$, all goods in $A_t$ are heavy for $t$ and light for $k$.
  \end{enumerate}
\end{lemma}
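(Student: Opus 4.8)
The plan is to prove (\ref{neatOrder})--(d) \emph{simultaneously} by induction on the number of exchanges already carried out in Phase~3. The induction hypothesis is that (\ref{neatOrder}), (\ref{oneStep}) and (\ref{lexmax}) hold for the current allocation; the inductive step additionally shows that the exchange about to be performed (if the \textbf{while}-loop continues) satisfies (d), and then uses (d) to re-establish (\ref{neatOrder})--(\ref{lexmax}) after that exchange. These four statements are genuinely intertwined, so they must be carried through the induction together rather than proved one at a time.

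For the base case (the allocation right after Phase~2), (\ref{neatOrder}) is immediate from the chosen numbering; (\ref{lexmax}) holds because, by Theorem~\ref{binaryNSW}, Phase~1 returns an allocation that is leximax among all heavy-only allocations (hence among those of its own cardinality), and Phase~2 adds only light edges, which leave the heavy part unchanged; and (\ref{oneStep}) follows from the greedy rule of Phase~2 exactly as in the proof of Lemma~\ref{greedy}: a good light for $i$ can occur in $A_i$ only if $i$ received it in Phase~2 (Phase~1 produces a heavy-only allocation), and if $g$ is the last good $i$ receives in Phase~2 and $\mu$ is the minimum utility at that moment, then $\uv_i(A_i)=\mu+1$ at the end while utilities only increase during Phase~2, so the final minimum is at least $\mu$.

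For the inductive step, consider an exchange moving a good $g$ from the maximum-utility bundle $A_t$ to a minimum-utility bundle $A_k$, so $\uv_t(A_t)=\uv_n(A_n)$, $\uv_k(A_k)=\uv_1(A_1)$, and $t>k$ since $\uv_n(A_n)>p\,\uv_1(A_1)+p>\uv_1(A_1)$. First I prove (d). Since $\uv_t(A_t)>p\,\uv_1(A_1)+p>\uv_1(A_1)+1$, the contrapositive of (\ref{oneStep}) shows $A_t$ contains no good light for $t$; hence $\uv_t(A_t)=p\deg_H(t)$, and as $p\deg_H(j)\le\uv_j(A_j)\le\uv_t(A_t)$ for every $j$, agent $t$ has maximum heavy degree. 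Suppose some $g\in A_t$ were heavy for $k$. Combining the \textbf{while}-condition with $\uv_k(A_k)\ge p\deg_H(k)$ gives $p\deg_H(t)=\uv_t(A_t)>p\,\uv_1(A_1)+p\ge p^2\deg_H(k)+p$, whence $\deg_H(t)\ge\deg_H(k)+2$ by integrality and $p\ge 2$. Then $B:=(\AH\setminus\{(t,g)\})\cup\{(k,g)\}$ is a heavy-only allocation with $|B|=|\AH|$, and its multiset of heavy degrees arises from that of $\AH$ by decreasing $\deg_H(t)$ by one and increasing $\deg_H(k)$ by one; since these two values remain at least one apart, this ``Robin Hood'' transfer makes the non-decreasing rearrangement strictly lexicographically larger (the elementary fact that replacing two integers $a\le b-2$ of a multiset by $a+1$ and $b-1$ does so), contradicting (\ref{lexmax}). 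Hence every good of $A_t$ is light for $k$, which together with the previous paragraph is exactly (d).

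It remains to recover (\ref{neatOrder})--(\ref{lexmax}) after the exchange, using (d): $\uv_t(A_t)$ drops by exactly $p$ and $\uv_k(A_k)$ rises by exactly $1$. For (\ref{oneStep}): $A_t$ stays heavy-only for $t$ (so the statement is vacuous there); agent $k$'s new value is the old minimum plus $1$, and the minimum does not decrease (the agents below $k$, all at the old minimum, are untouched when $k\ge2$, and when $k=1$ the minimum merely rises by $1$), so (\ref{oneStep}) holds for $k$ and for every unchanged agent. For (\ref{neatOrder}): agent $k$ moves to $\uv_1(A_1)+1\le\uv_{k+1}(A_{k+1})$, keeping its place; the only delicate point is that agent $t$, now at $\uv_n(A_n)-p$, must not fall below $\uv_{t-1}(A_{t-1})$. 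Here I use that if $\uv_{t-1}(A_{t-1})>\uv_1(A_1)+1$ then by (\ref{oneStep}) $A_{t-1}$ is heavy-only, so $\uv_{t-1}(A_{t-1})$ is a multiple of $p$ strictly below the multiple of $p$ equal to $\uv_n(A_n)$, forcing $\uv_{t-1}(A_{t-1})\le\uv_n(A_n)-p$, whereas if $\uv_{t-1}(A_{t-1})\le\uv_1(A_1)+1$ then $\uv_n(A_n)-p>p\,\uv_1(A_1)\ge\uv_1(A_1)+1$ already suffices. For (\ref{lexmax}): the heavy part loses precisely the edge $(t,g)$ at the agent of maximum heavy utility, so the new heavy profile is the old one with $p$ subtracted from a maximal coordinate, and Lemma~\ref{lemmaOne} shows it is leximax among heavy-only allocations of the new cardinality. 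This completes the induction.

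The step I expect to be the main obstacle is (\ref{neatOrder}): because the source bundle loses a full $p$ rather than $1$, it is far from clear that the fixed numbering stays sorted, and the fix is the structural observation -- itself a consequence of (\ref{oneStep}) -- that every bundle whose value exceeds the minimum by more than $1$ is heavy-only, hence has value divisible by $p$, so that $\uv_n(A_n)-p$ cannot land strictly between the values of the lower bundles. The second most delicate point is the ``light for $k$'' half of (d), which needs the lexicographic ``Robin Hood'' comparison against (\ref{lexmax}).
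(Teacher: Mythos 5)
Your proof follows the same inductive structure as the paper's (induction on iterations of Phase~3, establishing (d) from (b) and (c), then using (d) and Lemma~\ref{lemmaOne} to recover (a)--(c)), and it is correct, in fact somewhat more detailed in places -- e.g., you explicitly derive the degree gap $\deg_H(t)\ge\deg_H(k)+2$ that the paper leaves implicit when concluding the lexicographic contradiction, and you replace the paper's case split $k=t-1$ versus $k<t-1$ for part (a) with a divisibility argument. One small slip: in the final case of (a) you write $\uv_n(A_n)-p>p\,\uv_1(A_1)\ge\uv_1(A_1)+1$, but the second inequality fails when $\uv_1(A_1)=0$; the paper handles this via a separate observation that $\uv_{t-1}(A_{t-1})=0$ forces $k=t-1$, and your argument also goes through here since $\uv_n(A_n)-p\ge1\ge\uv_{t-1}(A_{t-1})$ when $\uv_1(A_1)=0$, so the omission is cosmetic rather than substantive.
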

\begin{proof} We prove statements a) to d) by induction on the number of iterations in phase 3. Before the first iteration a) and d) trivially hold. Claim b) holds since in phase 2 we allocate only goods that are light for every agent and since the next good is always added to a lightest bundle. Claim c) holds by Theorem \ref{binaryNSW}.

  Assume now that a) to c) hold before the $i$-th iteration and that we move a good $g$ from $A_t$ to $A_k$ in iteration $i$. We will show that d) holds for $A_t$ and $A_k$ and that a) to c) hold after iteration $i$.

  By the condition of the while-loop, we have $\uv_t(A_t) > p \cdot (\uv_k(A_k) + 1)$. Thus $A_t$ contains only goods that are heavy for $t$ by part b) of the induction hypothesis. Let $g$ be any good in $A_t$. If we also have $\uv_k(g) = p$, then moving $g$ from $A_t$ to $A_k$ would result in an allocation of heavy goods that is lexicographically larger, a contradiction to c). Thus $\uv_k(g) = 1$.

  After moving $g$, c) holds by lemma \ref{lemmaOne}. Note that $g$ is given from an agent with maximum utility and is not heavy for its new owner. 
  
  Since $k$ is the largest index such that $\uv_k(A_k) = \uv_1(A_1)$ before the $i$-th iteration, b) holds after the $i$-th iteration.

  It remains to show that part a) holds after the $i$-th iteration. The weight of the $k$-th bundle increases by 1 and the weight of the $t$-bundle decreases by $p$. We need to show $\uv_t(A_t) \ge \uv_{t-1}(A_{t-1}) +p +\delta$, where $\delta =1 $ if $k = t-1$ and $\delta = 0$ otherwise. 
  \begin{itemize}
    \item If $k = t-1$, we have 
    $\uv_t(A_t) \ge p \cdot (\uv_{t-1}(A_{t-1} )+ 1) + 1$ and hence $\uv_t(A_t) - \uv_{t-1}(A_{t-1}) - p - 1 \ge (p-1) \cdot \uv_{t-1}(A_{t-1})  \ge 0$. 
    \item If $k < t-1$, by definition of $t$, $\uv_t(A_t) > \uv_{t-1}(A_{t-1})$. If all goods in $A_{t-1}$ are heavy for $t-1$, the difference in weight is at least $p$ and we are done. 
    If $A_{t-1}$ contains a good that is light for $t-1$, then $\uv_{t-1}(A_{t-1}) \le \uv_k(A_k) + 1$ by condition b) and hence $\uv_t(A_t) \ge p \cdot \uv_{t-1}(A_{t-1}) + 1$. This implies $\uv_t(A_t) \ge \uv_{t-1}(A_{t-1}) + p$ except if $\uv_{t-1}(A_{t-1}) = 0$. In the latter case, $k = t-1$, a case we have already dealt with.  
  \end{itemize}    
  
  We also need to show that after moving the good, $\uv_k(A_k) \leq \uv_{k+1}(A_{k+1})$. By the choice of $k$ and the fact that $u_i(X_i) \in \mathbb{N}$, $\uv_k(A_k) \leq \uv_{k+1}(A_{k+1}) + 1$ holds before moving the good. After moving the good, by condition d), $\uv_k(A_k)$ increases by $1$ and therefore, $\uv_k(A_k) \leq \uv_{k+1}(A_{k+1})$.
  \end{proof}

	\begin{theorem}\label{oneCase}
		Let $\OPT$ be an allocation that maximizes $\NSW$ and subject to that, maximizes $|\OPTH|$. Let $A$ be the output of algorithm \ref{alg2}. Then $|\OPTH| \ge |\AH|$. 
	\end{theorem}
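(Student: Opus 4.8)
The plan is to argue by contradiction: assume $|\AH| > |\OPTH|$, and show that $A$ would then itself be an $\NSW$-maximizing allocation, contradicting the choice of $\OPT$ as the $\NSW$-maximizer with the largest number of heavy edges.

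First I record the structure. By Lemma \ref{importantLemma}(\ref{lexmax}), $\AH$ is leximax among heavy-only allocations of cardinality $|\AH|$, and by the corollary to Theorem \ref{heavysim}, $\OPTH$ is leximax among heavy-only allocations of cardinality $|\OPTH|$. Since $|\OPTH| < |\AH|$, Corollary \ref{subset} shows that the sorted heavy-degree profile of $\OPTH$ is dominated componentwise by that of $\AH$. Counting the contribution of each good, any complete allocation $X$ satisfies $\sum_i \uv_i(X_i) = m + (p-1)|X^H|$; since phases 2 and 3 keep $A$ complete and we may assume $\OPT$ complete, this yields $U_A := \sum_i \uv_i(A_i) = m + (p-1)|\AH| > m + (p-1)|\OPTH| = \sum_i \uv_i(\OPT_i) =: U_{\OPT}$. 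Moreover, by Lemma \ref{importantLemma}(\ref{neatOrder},\ref{oneStep}) the utility profile of $A$ is exactly the one obtained by distributing the light goods greedily on top of the leximax heavy-only allocation of cardinality $|\AH|$, and by Lemma \ref{greedy} we may take $\OPT$ to be its heavy part plus greedily-added light goods. Thus both $A$ and (a representative of) $\OPT$ have the form ``leximax heavy-only allocation $+$ greedy light goods''.

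The heart of the argument is to show $\NSW(A) \ge \NSW(\OPT)$, and I split on the state in which phase 3 terminated. If all bundles of $A$ lie within $1$ of each other, then $\prod_i \uv_i(A_i)$ is the maximum product of $n$ positive integers with sum $U_A$; since this maximum is nondecreasing in the sum and $U_A > U_{\OPT}$, we get $\prod_i \uv_i(A_i) \ge \prod_i \uv_i(\OPT_i)$. Otherwise, by Lemma \ref{importantLemma}(\ref{oneStep}) every bundle of $A$ above $\uv_1(A_1)+1$ is heavy-only, and since phase 3 terminated we have $\uv_n(A_n) \le p\,\uv_1(A_1) + p$. In this case I pass from $\OPT$ to $A$ one heavy edge at a time: each step replaces the leximax heavy-only allocation of some cardinality $c$ by that of cardinality $c+1$ (which, reading Lemma \ref{lemmaOne} backwards, raises the smallest feasible heavy degree by $p$) and drops one good from the greedy light fill. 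Using $p \ge 2$ together with the factor-$p$ balance bound $\uv_n(A_n)\le p(\uv_1(A_1)+1)$ to control how unbalanced the intermediate allocations can be, each such step does not decrease the Nash product, so again $\NSW(A) \ge \NSW(\OPT)$.

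Once $\NSW(A) \ge \NSW(\OPT)$ is established, optimality of $\OPT$ forces equality, so $A$ maximizes $\NSW$; but then $|\AH| > |\OPTH|$ contradicts $\OPT$ having the most heavy edges among $\NSW$-maximizing allocations, and hence $|\OPTH| \ge |\AH|$. I expect the main obstacle to be the second case of the $\NSW(A)\ge\NSW(\OPT)$ step: showing quantitatively that converting a freely placeable unit-value good into a $p$-value good pinned to a leximax slot never hurts the Nash product once $A$ has been balanced to within the factor-$p$ slack. This is precisely the ``local sensitivity of $A$ to heavy edges'' advertised in the introduction, and it is where the invariants of Lemma \ref{importantLemma} and the domination from Corollary \ref{subset} do the real work.
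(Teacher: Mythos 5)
Your high-level shape is the same as the paper's — assume $|\OPTH| < |\AH|$ and derive a contradiction with the extremal choice of $\OPT$ — but the intermediate claim you rely on is both stronger than what the paper proves and, crucially, left unproven where it matters most.

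The paper does not try to show $\NSW(A) \ge \NSW(\OPT)$ directly. It shows something much more local: given the assumption $|\OPTH| < |\AH|$, one can find a \emph{single} reallocation on $\OPT$ (take the good $g$ at the end of an alternating path in $\OPTH \oplus \AH$, flip the path to give one more heavy edge to agent $t$, and replenish the donor $j$ from the bundle of an agent $k$ with $\uv_k(\OPT_k)>\uv_k(A_k)$) that produces an allocation $\widehat{\OPT}$ with $\NSW(\widehat{\OPT}) \ge \NSW(\OPT)$ and $|\widehat{\OPT}^H| = |\OPTH|+1$. This contradicts the choice of $\OPT$ as the $\NSW$-maximizer with maximal $|\OPTH|$. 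Only \emph{one} cardinality step ever needs to be analyzed, and the key inequality $p\cdot\uv_k(\OPT_k) - \uv_t(\OPT_t) - p \ge 0$ follows from chaining the phase-3 termination bound $\uv_n(A_n) \le p\,\uv_1(A_1)+p$ with the domination $\abs{\OPTH_i}\le\abs{\AH_i}$ and Lemma \ref{prop3}.

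Your argument instead needs the Nash product to be monotone along the \emph{entire} path of leximax-heavy-plus-greedy-light allocations from cardinality $|\OPTH|$ up to $|\AH|$. That is a substantially stronger claim, and you explicitly leave it as ``I expect the main obstacle to be the second case.'' It is not just an obstacle of exposition: the phase-3 termination bound you invoke constrains $A$ (cardinality $|\AH|$), not the \emph{intermediate} allocations of cardinality $|\OPTH|+1,\dots,|\AH|-1$, which are never visited by the algorithm and for which you have no a priori balance guarantee. Without a balance bound on those intermediates, the one-step product inequality you want does not obviously hold — moving a unit-value good into a $p$-value slot can in general decrease the Nash product if it further enriches an already large bundle. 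The paper sidesteps this entirely by working on $\OPT$ (where Corollary \ref{light-bundles} and Lemma \ref{greedy} give the needed structure) and by only ever taking one step, so the maximality of $|\OPTH|$ among optima does the rest. In short: your first case and the bookkeeping $U_A = m+(p-1)|A^H|$ are fine, but the second case — the heart of the proof — is a missing argument, not merely a harder one, and the route you sketch is more ambitious than what the theorem actually requires.
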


	\begin{proof} 
		Assume $|\OPTH| < |\AH|$. By the choice of $\OPT$, $A$ cannot maximize $\NSW$. We first show that we may assume $\abs{\OPTH_i} \le \abs{\AH_i}$ for all $i$. We can obtain a leximax heavy-only allocation $\CH$ of cardinality $\abs{\OPTH}$ from $\AH$ by repeatedly removing a good from the lowest indexed bundle of maximum utility. The utility profiles of $\CH$ and $\OPTH$ agree and hence there is a bijection $\pi$ of the set of agents such that $\abs{\CH_i} = \abs{\OPTH_{\pi(i)}}$. 
		Let $\ell_i$ be the number of goods in $\OPTH_{\pi(i)}$ which are light for $\pi(i)$. Note that the number of goods which are not allocated under $C^H$ is equal to the number of light goods under $\OPT$, i.e, $\Sigma_{i \in [n]} \ell_i$. 
		Obtain an allocation $C$ from $\CH$ by giving $\ell_i$ not yet allocated goods to $C_i$. Then $\uv_i(C_i) \ge \uv_{\pi(i)}(OPT_{\pi(i)})$ for all $i$. Thus, $C$ is optimal and $\uv_i(C_i) = \uv_{\pi(i)}(OPT_{\pi(i)})$. Also $\abs{C_i} \le \abs{\AH_i}$ for all $i$. We may therefore assume $\abs{\OPTH_i} \le \abs{\AH_i}$ for all $i$.
		
		Since $C$ is optimal, Lemma \ref{greedy} gives us an alternative way of obtaining an optimal allocation. Start from $\CH$ and then allocate the goods that are allocated as light goods (i.e. the goods that are light for their owner) in $\OPT$ in a greedy fashion. Let then $R$ be the set of agents from which we removed a good in moving from $\AH$ to $\CH$. Note that no good is added to the bundle of an agent in $R$ when adding goods greedily. Otherwise, we would have a contradiction to Lemma~\ref{prop3}.

	Since $A$ is not optimal there must be an agent $k$ such that $\uv_k(\OPT_k) > \uv_k(A_k)$. The bundle $\OPT_k$ must contain a good $g$ that is light for $k$. 
		
		Since $A$ is the output of Algorithm \ref{alg2}, moving a good from $A_n$ to $A_1$ does not increase $\NSW$. So
		\begin{align*}
		(\uv_1(A_1) + 1)(\uv_n(A_n - p)) &\leq \uv_1(A_1)\uv_n(A_n) \\
		\intertext{and hence,} \uv_n(A_n) &\leq p \cdot \uv_1(A_1) + p.
		\end{align*}
		
		Consider $\OPTH \oplus \AH$ and its alternating path decomposition. Since the number of heavy edges in $\OPT$ is less than the number of heavy edges in $A$, there must be alternating path $P$ starting with an edge in $A$ and ending in an edge in $A$. Let $t$ and $g$ be the endpoints of the path; $t$ is an agent and $g$ is a good. Then $g$ must be allocated in $\OPT$ as a light good to an agent $j$ since $g$ has degree one in $\OPTH \oplus \AH$. 

        Since $t \in R$, $\abs{\OPTH_t} < \abs{\AH_t}$, and no good is added to $\OPT_t$ in the greedy assignment of goods,	$\uv_t(\OPT_t) + p \leq \uv_t(A_t) \leq \uv_n(A_n)$. We also have $\uv_k(\OPT_k) > \uv_k(A_k) \geq \uv_1(A_1)$. Therefore, we get	

		\begin{align*} \label{ineq}
		\uv_t(\OPT_t) + p \le \uv_n(A_n) &\le p \cdot \uv_1(A_1) + p \le p \cdot \uv_k(\OPT_k)
		\end{align*}
		and hence,
		\begin{align*}
		\uv_t(\OPT_t) \leq p\cdot \uv_k(\OPT_k) - p.
		\end{align*}
		
		Taking $g$ from $j$'s bundle and changing $\OPT$ to $\OPT \oplus P$ increases the number of heavy goods allocated to $t$ by one. In case $k \neq j$, we replenish $j$ from $\OPT_k$ by taking a light good from $\OPT_k$ and allocating it to $\OPT_j$. This reallocation of goods does not decrease $\NSW$ as:
		\begin{align}
		(\uv_k(\OPT_k) - 1)(\uv_t(\OPT_t) + p) - \uv_k(\OPT_k)\uv_t(\OPT_t) = p \cdot \uv_k(\OPT_k) - \uv_t(\OPT_t) - p \ge 0.
		\end{align}	
		
		For the new allocation $\widehat{\OPT}$ we have $\NSW (\widehat{\OPT}) \geq \NSW (\OPT)$ and $|\widehat{\OPT}^H| > |\OPTH|$ which contradicts the choice of $\OPT$.
	\end{proof}
	
	\begin{lemma}\label{thisLemma}
		Let $\hat{A}$ be the partial allocation after phase $1$ of the Algorithm \ref{alg2}. Then $|\hat{A}^H| \ge |\OPTH|$ for any optimal allocation $\OPT$. 
	\end{lemma}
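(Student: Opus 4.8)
The plan is to argue that the allocation $\hat{A}$ returned by Phase~1 is a heavy-only allocation of maximum possible cardinality among all heavy-only allocations, and then to note that $\OPTH$ is merely one particular heavy-only allocation, so its cardinality cannot exceed this maximum.

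First I would unpack Phase~1. It runs Algorithm~\ref{alg1} on the derived instance $\uv'$ with $\uv'_i(g)=\floor{\uv_i(g)/p}$; this is $1$ precisely on heavy edges and $0$ on light edges, so light edges carry no $\uv'$-value and we may take $\hat{A}\subseteq\EH$, i.e.\ $\hat{A}=\hat{A}^H$. By the pseudocode of Algorithm~\ref{alg1}, $A$ is initialised as a maximum-cardinality multi-matching of the heavy subgraph, and the only subsequent modification is the shift along an even alternating path $a_0,g_1,\dots,g_k,a_k$ in the inner \textbf{for}-loop. This shift reassigns $g_1$ to $a_0$, moves $g_{\ell+1}$ from $a_\ell$ into the slot vacated by $g_\ell$ for $1\le\ell<k$, and drops $g_k$ from $a_k$; no good changes its \emph{assigned/unassigned} status, so $|A|$ is invariant throughout the loop. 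Hence $|\hat{A}^H|=|\hat{A}|$ equals the maximum cardinality of a heavy-only allocation. Finally, because every heavy good has at least one incident heavy edge and agents are permitted unbounded degree, all heavy goods can be assigned simultaneously along heavy edges; therefore this maximum is exactly the number of heavy goods in the instance.

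For the other side, let $\OPT$ be any optimal allocation. Its heavy part $\OPTH=\OPT\cap\EH$ is by definition a heavy-only allocation, and since every good is incident to at most one edge of $\OPT$ and every edge of $\OPTH$ is heavy, $\OPTH$ assigns only heavy goods, each at most once. Thus $|\OPTH|$ is at most the number of heavy goods, which equals $|\hat{A}^H|$ by the previous paragraph, giving $|\hat{A}^H|\ge|\OPTH|$. I do not anticipate a real obstacle: the only points requiring (routine) justification are that the initial multi-matching in Algorithm~\ref{alg1} has maximum cardinality and that the augmentation step preserves cardinality, both immediate from the pseudocode; the rest is the trivial bound that no allocation can place more heavy goods than exist.
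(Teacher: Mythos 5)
Your proof is correct, but it takes a different route from the paper's. The paper argues by contradiction using the conclusion of Theorem~\ref{binaryNSW}: if $|\hat{A}^H|<|\OPTH|$, then (since $\hat{A}$ is heavy-only) some heavy good is left unassigned in $\hat{A}$, and adding a heavy edge incident to it would yield a lexicographically larger profile, contradicting that the Phase~1 output is $\NSW$-maximizing/leximax on the binary instance. You instead open up Algorithm~\ref{alg1} itself: the initial multi-matching has maximum heavy-only cardinality (indeed, equals the number of heavy goods since agents have unbounded degree), the augmentation step shifts goods along a path without changing any good's assigned/unassigned status, so $|\hat{A}^H|$ stays equal to the number of heavy goods, and $|\OPTH|$ is trivially at most that. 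Both proofs hinge on the same underlying fact --- that $\hat{A}^H$ assigns every heavy good --- but the paper derives it top-down from the optimality guarantee of Phase~1, while you derive it bottom-up from the algorithm's invariants. Your version is slightly more self-contained (it does not need Theorem~\ref{binaryNSW} or even the notion of $\NSW$) and also sidesteps a small imprecision in the paper's phrasing ("$\NSW$ increases" is not literally true if some bundle has zero value, though the leximax argument still goes through); the cost is that it relies on reading the pseudocode of Algorithm~\ref{alg1} and on the convention that the "maximum multi-matching" found there is a heavy-only one, which is implicit in the binary setting.
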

	
	\begin{proof}
		Assume otherwise. Then there should be a heavy edge $(i,g)$ which is not in $\hat{A}$. After allocating $g$ to agent $i$, $\NSW$ increases. This contradicts Theorem \ref{binaryNSW}.
	\end{proof}
	
	By Lemma \ref{thisLemma} and Theorem \ref{oneCase}, we can assume in some round in phase $3$ of Algorithm \ref{alg2} with allocation $\Tilde{A}$, $|\Tilde{A}^H| = |\OPTH|$. Then, by Lemma \ref{importantLemma}.\ref{lexmax} and Theorem \ref{heavysim}, we get the following Corollary.
	
	\begin{corollary}\label{meetingOPT}
	    There is an optimal allocation $\OPT$ such that $\Tilde{A}^H = \OPTH$.
	\end{corollary}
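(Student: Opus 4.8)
The plan is to combine the three ingredients that the excerpt has just assembled. Recall the setup: by Lemma~\ref{thisLemma} the allocation $\hat{A}$ at the end of Phase~1 satisfies $|\hat{A}^H| \ge |\OPTH|$, and by Theorem~\ref{oneCase} the final output $A$ of Algorithm~\ref{alg2} satisfies $|\OPTH| \ge |\AH|$ (for $\OPT$ chosen to maximize $\NSW$ and subject to that $|\OPTH|$). Phase~2 does not change the heavy part, so at the start of Phase~3 the heavy cardinality is $|\hat{A}^H| \ge |\OPTH|$, and at the end of Phase~3 it is $|\AH| \le |\OPTH|$. Since every iteration of the Phase~3 while-loop moves exactly one heavy good from $A_t$ to $A_k$ and, by Lemma~\ref{importantLemma}.d, that good was heavy for $t$ and is light for $k$, each iteration decreases the number of heavy edges by exactly one. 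Hence the heavy cardinality decreases by unit steps from $|\hat{A}^H|$ down to $|\AH|$, and since $|\hat{A}^H| \ge |\OPTH| \ge |\AH|$, there must be an intermediate round — call its allocation $\Tilde{A}$ — with $|\Tilde{A}^H| = |\OPTH|$. This is exactly the sentence preceding the corollary.

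First I would invoke Lemma~\ref{importantLemma}.\ref{lexmax}, which holds at every round of Phase~3, to conclude that $\Tilde{A}^H$ is leximax among all heavy-only allocations of cardinality $|\Tilde{A}^H| = |\OPTH|$. Next I would apply the corollary to Theorem~\ref{heavysim} (the one stating that among $\NSW$-maximizing allocations there is one whose heavy part is leximax among all heavy-only allocations of the same cardinality): pick $\OPT$ to be an optimal allocation such that $\OPTH$ is leximax among heavy parts of optimal allocations; then $\OPTH$ is leximax among all heavy-only allocations with $|\BH| = |\OPTH|$. Since a leximax allocation of a fixed cardinality has a unique utility profile, and $|\Tilde{A}^H| = |\OPTH|$, the utility profiles of $\Tilde{A}^H$ and $\OPTH$ coincide.

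The remaining step is to upgrade ``same utility profile'' to ``equal as sets of edges.'' Equal utility profiles on heavy-only allocations mean that the multiset of heavy degrees of $\Tilde{A}^H$ and of $\OPTH$ agree (since every heavy edge contributes $p$). So there is a bijection $\sigma$ of the agents with $|\Tilde{A}^H_i| = |\OPTH_{\sigma(i)}|$. One then replaces $\OPTH$ by a relabeling so that $|\OPTH_i| = |\Tilde{A}^H_i|$ for all $i$; I would argue, as in the proof of Theorem~\ref{oneCase}, that this relabeled heavy part can be extended by the light goods of $\OPT$ (greedily, using Lemma~\ref{greedy}) into an allocation that is still optimal and has the same utility profile as $\OPT$ agent-by-agent. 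At that point both $\Tilde{A}^H$ and $\OPTH$ are heavy-only allocations with the same heavy-degree sequence, each leximax for its cardinality; feeding the difference graph $\Tilde{A}^H \oplus \OPTH$ into the alternating-path machinery of Theorem~\ref{heavysim} (no cycles, no even paths, no odd paths, since any such component could be augmented to bring one allocation strictly closer to the other without changing the profile) forces $\Tilde{A}^H \oplus \OPTH = \emptyset$, i.e. $\Tilde{A}^H = \OPTH$.

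The main obstacle is precisely this last passage from profile-equality to edge-equality: a leximax utility profile pins down the heavy-degree multiset but not which specific goods each agent holds, so one cannot simply assert $\Tilde{A}^H = \OPTH$ — one must, as above, first re-choose $\OPT$ within its optimal class so that its heavy part is both correctly labeled and leximax, and then re-run the difference-decomposition argument of Theorem~\ref{heavysim} verbatim. Everything else (the counting argument that produces the intermediate round, and the appeal to uniqueness of the leximax profile) is routine bookkeeping on top of the already-established lemmas.
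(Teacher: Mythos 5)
Your first two paragraphs reproduce the paper's one-line argument exactly: the counting argument produces an intermediate $\Tilde{A}$ with $\abs{\Tilde{A}^H} = \abs{\OPTH}$, and Lemma~\ref{importantLemma}(c) together with the corollary of Theorem~\ref{heavysim} show that $\Tilde{A}^H$ and $\OPTH$ are both leximax of the same cardinality, hence have the same utility profile. That is all the paper explicitly invokes, and the sentence immediately after the corollary ("we end up having the same utility profile") confirms that profile-equality is what is actually used downstream.

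Your attempt to strengthen this to genuine edge-set equality, however, has two gaps. First, the "relabeling" move is not a valid operation on heavy-only allocations: a heavy edge $(\sigma(i),g)$ is agent-specific, and $g$ need not be heavy for $i$, so permuting bundles generally destroys the property $\OPTH \subseteq \EH$. Second, you cannot "run the difference-decomposition argument of Theorem~\ref{heavysim} verbatim" on $\Tilde{A}^H \oplus \OPTH$, because that machinery has $\CH$ as the free object (Observations~\ref{noCycle} and~\ref{noEvenLGoods} both replace $\CH$ by $\CH \oplus D$), whereas here you want to keep $\CH = \Tilde{A}^H$ fixed; if you instead try to modify $\OPTH$, the cycle case survives but the even-path-with-good-endpoints case changes which goods are heavy-allocated, and you would then need to re-verify optimality of the modified $\OPT$ after the light goods are redistributed. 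The circularity you flag is real: running that argument requires knowing $\Tilde{A}$ is optimal, which is exactly what is still to be proved. The cleanest way to get the literal statement $\Tilde{A}^H = \OPTH$ is the route the paper takes in the paragraph \emph{after} the corollary: from profile-equality of the heavy parts, plus Lemma~\ref{importantLemma}(a),(b) and Lemma~\ref{greedy}, the full utility profiles of $\Tilde{A}$ and $\OPT$ agree, hence $\Tilde{A}$ is itself optimal, and one can simply take $\OPT \assign \Tilde{A}$. In other words, edge-equality is a consequence of $\Tilde{A}$'s optimality, not a stepping stone toward it; your third and fourth paragraphs try to establish it prematurely.
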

	
	So far, we have proved that considering only heavy allocated goods in $\Tilde{A}$ and $\OPT$, we end up having the same utility profile. By Lemma \ref{importantLemma}.\ref{neatOrder} and Lemma \ref{greedy}, in both allocations $\OPT$ and $\Tilde{A}$, light goods are allocated as evenly as possible.

	So we can conclude that the utility profiles of $\Tilde{A}$ and $\OPT$ are equal. In each round of the phase $3$ of Algorithm \ref{alg2}, Nash Social Welfare increases. This means $\NSW (A) \ge \NSW (\Tilde{A}) = \NSW (\OPT)$. 

	\begin{theorem}\label{IntegerResult}
	There exists a polynomial-time algorithm for the Nash Social Welfare problem with 2-value instances $1$ and $p$ when $p$ is an integer.
	\end{theorem}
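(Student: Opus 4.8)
The plan is to assemble the final theorem from the structural results and algorithmic invariants already developed. The algorithm is \texttt{TwoValueMaxNSW} (Algorithm~\ref{alg2}), so what remains is to argue correctness and a polynomial running-time bound. For correctness, I would proceed as follows. First, fix an optimal allocation $\OPT$ chosen to maximize $\NSW$ and, subject to that, to maximize $|\OPTH|$; further refine it so that $\OPTH$ is leximax among heavy parts of all $\NSW$-maximizing allocations, which is possible by Theorem~\ref{heavysim} and its corollary. By Lemma~\ref{thisLemma}, the allocation after Phase~1 has $|\hat A^H| \ge |\OPTH|$, and by Theorem~\ref{oneCase}, the final output $A$ satisfies $|\OPTH| \ge |\AH|$. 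Since each iteration of Phase~3 decreases the heavy-edge count by exactly one (Lemma~\ref{importantLemma}(d) together with the observation that the moved good becomes light for its new owner), the heavy-edge count descends by unit steps from $|\hat A^H| \ge |\OPTH|$ down to $|\AH| \le |\OPTH|$, so at some round there is an allocation $\tilde A$ with $|\tilde A^H| = |\OPTH|$.

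Next I would invoke Lemma~\ref{importantLemma}(c), which says $\tilde A^H$ is leximax among heavy-only allocations of its cardinality, together with the corollary to Theorem~\ref{heavysim}, to get Corollary~\ref{meetingOPT}: there is an optimal allocation $\OPT$ with $\tilde A^H = \OPTH$ (after possibly relabeling, the utility profiles of the heavy parts coincide). Then, using Lemma~\ref{importantLemma}(a) (the sorted order is maintained) and Lemma~\ref{greedy} (greedy light-good allocation on top of the optimal heavy part yields an optimal allocation), I would argue that light goods are distributed ``as evenly as possible'' in both $\tilde A$ and $\OPT$, so that starting from the same heavy utility profile and adding the same number of light goods greedily produces the same overall utility profile. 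Hence $\NSW(\tilde A) = \NSW(\OPT)$. Since Phase~3 only ever moves goods when $\NSW$ strictly increases, $\NSW(A) \ge \NSW(\tilde A) = \NSW(\OPT)$, and as $\OPT$ is optimal we conclude $\NSW(A) = \NSW(\OPT)$, i.e., $A$ is optimal.

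Finally I would address the running time. Phase~1 is Algorithm~\ref{alg1}, shown polynomial by Barman et al.~\cite{barman2018binarynsw}. Phase~2 allocates at most $m$ goods, each step being a minimum-utility lookup, so it is polynomial. For Phase~3, I need a polynomial bound on the number of iterations: each iteration strictly increases $\NSW$, equivalently increases the product $\prod_i \uv_i(A_i)$, and each iteration converts one heavy edge to a light edge (Lemma~\ref{importantLemma}(d)), so the number of iterations is at most the number of heavy edges in the Phase~1 allocation, which is at most $\min(n,m) \le m$. Each iteration again costs polynomial time. Combining, the whole algorithm runs in polynomial time, and together with the correctness argument this proves Theorem~\ref{IntegerResult}.

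The step I expect to be the main obstacle is the gluing argument that equality of heavy utility profiles plus greedy light-good allocation forces equality of the full utility profiles of $\tilde A$ and $\OPT$ --- one must be careful that ``greedy on a lightest bundle'' is well-defined up to the final profile (not the actual assignment), that the counts of light goods match (total light goods is determined once the heavy edges are fixed, since the instance is fixed), and that Lemma~\ref{importantLemma}(b) guarantees the greedy invariant ``bundles containing a light good have utility within $1$ of the minimum'' holds in $\tilde A$ exactly as Corollary~\ref{light-bundles} guarantees it in $\OPT$. Everything else is a bookkeeping assembly of the already-proved lemmas.
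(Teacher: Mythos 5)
Your proposal is correct and mirrors the paper's argument almost exactly: the paper also derives $|\hat A^H| \ge |\OPTH| \ge |\AH|$ from Lemma~\ref{thisLemma} and Theorem~\ref{oneCase}, locates an intermediate $\tilde A$ with $|\tilde A^H| = |\OPTH|$, applies Corollary~\ref{meetingOPT} together with Lemma~\ref{importantLemma}(a),(c) and Lemma~\ref{greedy} to match the full utility profiles, and then bounds the running time by noting Phase~3 strictly decreases the heavy-edge count (Lemma~\ref{importantLemma}(d)) so it runs at most $m$ times. The caveats you flag at the end are exactly the bookkeeping the paper handles via those same lemmas, so there is no gap.
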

	
	\begin{proof}
	    We already proved that the output of Algorithm \ref{alg2} is a Nash social welfare maximizing allocation. It only remains to prove that this algorithm is polynomial-time. By \cite{barman2018binarynsw}, Algorithm \ref{alg1} and hence the first phase of Algorithm \ref{alg2} runs in polynomial time. The second phase clearly takes polynomial time. By lemma \ref{importantLemma}.d, the number of heavy goods under $A$ is decreasing after each iteration of the third phase. Therefore, this phase can be run at most $m$ times. All in all, Algorithm \ref{alg2} terminates in polynomial time. 
	\end{proof}
	
	\section{Approximation for General Two Value Instances}
	
	In this section, we want to introduce an approximation algorithm for general 2-value instances. Therefore we can assume that the values are $1$ and any real value $p>1$. The idea is to round $p$ to $\floor{p}$ or $\ceil{p}$ to have an integer value and then run Algorithm \ref{alg2}. We will show that this results in an approximation factor of $\max \{ \frac{\floor{p}}{p}, \frac{p}{\ceil{p}}\}$ for $\NSW$. Since $\max \{ \frac{\floor{p}}{p}, \frac{p}{\ceil{p}}\} \geq \sqrt{2}$, we improve the state-of-the-art $e^{1/e}$ approximation of the maximum Nash social welfare in 2-value instances.

	\begin{algorithm}
		\caption{ApproximationTwoValueMaxNSW} \label{alg3}
		Input : $N, M, \uv = (\uv_1, ... , \uv_n)$ 
		
		Output: allocation $\APXAlg$
		\begin{algorithmic}[1]
			\If{$\frac{\floor{p}}{p} \geq \frac{p}{\ceil{p}}$}:
			
			\State let $u'_i: 2^M \rightarrow \mathbb{N}$ be an additive function and $u'_i(g) = \floor{u_i(g)}$ for all agents $i$
			
			\Else:
			\State let $u'_i: 2^M \rightarrow \mathbb{N}$ be an additive function and $u'_i(g) = \ceil{u_i(g)}$ for all agents $i$
			\EndIf
            \State $\uv' \leftarrow (\uv'_1, \dots , \uv'_n)$
			\State $\APXAlg = $TwoValueMaxNSW$(N, M, u')$
			\State return $\APXAlg$
		\end{algorithmic}
	\end{algorithm}
	
	\begin{theorem}\label{APXResult}
	There exists a polynomial-time $\max \{ \frac{\floor{p}}{p}, \frac{p}{\ceil{p}}\}-$approximation algorithm for the Nash Social Welfare problem with 2-value instances $1$ and $p>1$.
	\end{theorem}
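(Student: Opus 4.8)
The plan is to show that the algorithm \texttt{ApproximationTwoValueMaxNSW} (Algorithm~\ref{alg3}) is both correct (achieves the claimed approximation ratio) and polynomial-time. Polynomiality is immediate: by Theorem~\ref{IntegerResult}, \texttt{TwoValueMaxNSW} runs in polynomial time on any integrally 2-value instance, and rounding all the good-values to $\floor{p}$ or to $\ceil{p}$ clearly takes polynomial time. So the whole effort is the approximation guarantee.

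For correctness, let $\uv$ be the true 2-value utility vector (values $1$ and $p$) and let $\uv'$ be the rounded vector the algorithm actually uses (either $\floor{p}$ or $\ceil{p}$ in place of $p$). Write $\OPT$ for an $\NSW$-maximizing allocation under $\uv$ and $\OPT'$ for one under $\uv'$; let $\APXAlg$ be the allocation returned. By Theorem~\ref{IntegerResult}, $\APXAlg$ maximizes $\NSW$ under $\uv'$, so $\NSW(\APXAlg, \uv') \ge \NSW(\OPT, \uv')$. The key observation is that rounding distorts every bundle's value by a bounded multiplicative factor: for any agent $i$ and any bundle $S$, if we set $q = \floor{p}$ (case one) then $q \le p$ implies $\frac{q}{p}\,\uv_i(S) \le \uv'_i(S) \le \uv_i(S)$ — the lower bound because each heavy good contributes $q \ge \frac{q}{p}\cdot p$ and each light good contributes $1 \ge \frac{q}{p}\cdot 1$, and the upper bound because $q \le p$ and $1 \le 1$. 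Symmetrically, if $q = \ceil{p}$ (case two) then $\uv_i(S) \le \uv'_i(S) \le \frac{q}{p}\,\uv_i(S)$. Applying the relevant bound to every agent and taking the geometric mean, in case one we get $\frac{\floor{p}}{p}\,\NSW(X,\uv) \le \NSW(X,\uv') \le \NSW(X,\uv)$ for every allocation $X$, and in case two $\NSW(X,\uv) \le \NSW(X,\uv') \le \frac{\ceil{p}}{p}\,\NSW(X,\uv)$, equivalently $\frac{p}{\ceil{p}}\,\NSW(X,\uv') \le \NSW(X,\uv) \le \NSW(X,\uv')$.

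Now chain these inequalities. In case one (the algorithm chose $\floor{p}$ because $\frac{\floor{p}}{p}\ge\frac{p}{\ceil{p}}$):
\begin{align*}
\NSW(\APXAlg,\uv) \;\ge\; \NSW(\APXAlg,\uv') \;\ge\; \NSW(\OPT,\uv') \;\ge\; \tfrac{\floor{p}}{p}\,\NSW(\OPT,\uv),
\end{align*}
using the upper bound on the true value of $\APXAlg$ being replaced — more carefully, $\NSW(\APXAlg,\uv) \ge \NSW(\APXAlg,\uv')$ follows from the lower bound $\frac{\floor p}{p}\NSW(\APXAlg,\uv)\le\NSW(\APXAlg,\uv')$ only up to the factor, so the honest chain is $\NSW(\APXAlg,\uv) \ge \frac{\floor p}{p}\NSW(\APXAlg,\uv') \ge$ wait — I should instead use: $\NSW(\APXAlg,\uv') \le \NSW(\APXAlg,\uv)$ (upper bound, case one) is the wrong direction. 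The correct route: $\NSW(\APXAlg,\uv) \ge \frac{\floor p}{p}\,\NSW(\APXAlg,\uv)$ is trivial; we want a lower bound on $\NSW(\APXAlg,\uv)$. We have $\NSW(\APXAlg,\uv)\ge \NSW(\APXAlg,\uv')$? No — in case one $\uv'\le\uv$ pointwise so $\NSW(\APXAlg,\uv)\ge\NSW(\APXAlg,\uv')$ indeed holds. Then $\NSW(\APXAlg,\uv')\ge\NSW(\OPT,\uv')\ge\frac{\floor p}{p}\NSW(\OPT,\uv)$, giving $\NSW(\APXAlg,\uv)\ge\frac{\floor p}{p}\NSW(\OPT,\uv)$. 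Symmetrically, in case two, $\uv'\ge\uv$ pointwise gives $\NSW(\APXAlg,\uv)\ge\frac{p}{\ceil p}\NSW(\APXAlg,\uv')\ge\frac{p}{\ceil p}\NSW(\OPT,\uv')\ge\frac{p}{\ceil p}\NSW(\OPT,\uv)$, where the middle step uses optimality of $\APXAlg$ under $\uv'$ and the last uses $\uv'\ge\uv$. In either case $\NSW(\APXAlg,\uv)\ge\max\{\frac{\floor p}{p},\frac{p}{\ceil p}\}\NSW(\OPT,\uv)$, as claimed.

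The only genuinely delicate point — and the part I would write out most carefully — is keeping the directions of the pointwise inequalities and the induced $\NSW$ inequalities consistent, since one case rounds down and the other rounds up, and the optimality of $\APXAlg$ holds only with respect to the rounded valuations $\uv'$. Everything else (the per-bundle multiplicative bounds, the passage to the geometric mean, polynomiality) is routine. I would also note the remark from the introduction that $\max\{\frac{\floor p}{p},\frac{p}{\ceil p}\}\ge\sqrt2$, which follows because the two quantities multiply to $\frac{\floor p}{\ceil p}\ge\frac12$ when $p\notin\mathbb Z$ (and both equal $1$ when $p\in\mathbb Z$), so their maximum is at least $\sqrt{1/2}$'s reciprocal-side bound; this shows the approximation beats the known $e^{1/e}$ factor but is not needed for the theorem statement itself.
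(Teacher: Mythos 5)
Your argument is correct and matches the paper's proof essentially line for line: in the $\floor{p}$ case you chain $\NSW(\APXAlg,\uv)\ge\NSW(\APXAlg,\uv')\ge\NSW(\OPT,\uv')\ge\tfrac{\floor{p}}{p}\NSW(\OPT,\uv)$, and in the $\ceil{p}$ case you chain $\NSW(\APXAlg,\uv)\ge\tfrac{p}{\ceil{p}}\NSW(\APXAlg,\uv')\ge\tfrac{p}{\ceil{p}}\NSW(\OPT,\uv')\ge\tfrac{p}{\ceil{p}}\NSW(\OPT,\uv)$, which are exactly the paper's inequalities, with polynomiality deferred to Theorem~\ref{IntegerResult}. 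One cosmetic point: the written version should drop the visible false starts (``wait --- I should instead use...'') and state the pointwise inequalities $\uv'\le\uv$ (resp.\ $\uv'\ge\uv$) and the multiplicative bounds up front; also your side remark about $\max\{\tfrac{\floor{p}}{p},\tfrac{p}{\ceil{p}}\}\ge\sqrt 2$ cannot be right as stated since both quantities are at most $1$ --- the correct conclusion from $\tfrac{\floor{p}}{p}\cdot\tfrac{p}{\ceil{p}}=\tfrac{\floor{p}}{\ceil{p}}\ge\tfrac12$ is that the maximum is at least $1/\sqrt 2$ (this typo is inherited from the paper's introduction and does not affect the theorem).
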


	\begin{proof}
	First assume $\frac{\floor{p}}{p} \geq \frac{p}{\ceil{p}}$.
	Consider $\OPT$ which is the optimum allocation under utility vector $\uv$. Consider this allocation under utility vector $\uv'$. The utility of the goods are either not changed, or are multiplied by $\floor{p} / p$. Therefore,
	\begin{align*}
	   \NSW(\OPT,\uv') \geq \NSW(\OPT,\uv) \cdot \frac{\floor{p}}{p}.
	\end{align*}	
	Since $p \geq \floor{p}$ and Algorithm \ref{alg2} gives the optimum allocation under utility vector $\uv'$, we have
	\begin{align*}
	    \NSW (\APXAlg,\uv) \geq \NSW (\APXAlg, \uv') \geq \NSW (\OPT,\uv').
	\end{align*}
	Therefore we can conclude,
	\begin{align*}
	    \NSW(\APXAlg, \uv) \geq \NSW(\OPT, \uv) \cdot \frac{\floor{p}}{p}.
	\end{align*}
	Now consider the second case in which $\frac{\floor{p}}{p} \leq \frac{p}{\ceil{p}}$. 
	Consider allocation $\APXAlg$ under utility vector $\uv$. With comparison to the utility vector $\uv'$, the utility of the goods are either not changed, or are multiplied by $ \ceil{p}/p$. Therefore,
	\begin{align*}
	    \NSW(\APXAlg, \uv) \geq \NSW (\APXAlg, \uv') \cdot \frac{p}{\ceil{p}}.
	\end{align*}
	Now consider $\OPT$ which is the optimum allocation under utility vector $\uv$. Since Algorithm \ref{alg2} gives the optimum allocation under $\uv'$ and $\ceil{p} \geq p$, we have
	\begin{align*}
	   \NSW (\APXAlg, \uv') \geq \NSW(\OPT, \uv') \geq \NSW(\OPT, \uv).
	\end{align*}
	Therefore we can conclude,
	\begin{align*}
	    \NSW(\APXAlg, \uv) \geq \NSW (\OPT, \uv) \cdot \frac{p}{\ceil{p}} .
	\end{align*}
	\end{proof}

\bibliography{main}
\bibliographystyle{abbrv}

\end{document}